\newtheorem{theorem}{Theorem}[section]
\newtheorem{algorithm}{Algorithm}[section]
\newtheorem{corollary}{Corollary}[section]
\newtheorem{lemma}{Lemma}[section]
\newtheorem{remark}{Remark}[section]
\newtheorem{assumption}{Assumption}[section]
\begin{document}
\title[Logit-Based Alternatives to Two-Stage Least Squares]{Logit-Based Alternatives to
Two-Stage Least Squares}
\author[Chetverikov]{Denis Chetverikov}
\author[Hahn]{Jinyong Hahn}
\author[Liao]{Zhipeng Liao}
\author[Sheng]{Shuyang Sheng}
\address[D. Chetverikov]{Department of Economics, UCLA, Bunche Hall, 8283,
315 Portola Plaza, Los Angeles, CA 90095, USA.}
\email{chetverikov@econ.ucla.edu}
\address[J. Hahn]{Department of Economics, UCLA, Bunche Hall, 8283, 315
Portola Plaza, Los Angeles, CA 90095, USA.}
\email{hahn@econ.ucla.edu}
\address[Z. Liao]{Department of Economics, UCLA, Bunche Hall, 8283, 315
Portola Plaza, Los Angeles, CA 90095, USA.}
\email{zhipeng.liao@econ.ucla.edu}
\address[S. Sheng]{Shenzhen Finance Institute, School of Management and Economics, The Chinese University of Hong Kong, Shenzhen, 2001 Longxiang Boulevard, Shenzhen, Guangdong, 518172, China.}
\email{shengshuyang@cuhk.edu.cn}
\date{\today }

\begin{abstract}
We propose logit-based IV and augmented logit-based IV estimators that serve
as alternatives to the traditionally used 2SLS estimator in the model where
both the endogenous treatment variable and the corresponding instrument are
binary. Our novel estimators are as easy to compute as the 2SLS estimator
but have an advantage over the 2SLS estimator in terms of causal
interpretability. In particular, in certain cases where the probability
limits of both our estimators and the 2SLS estimator take the form of
weighted-average treatment effects, our estimators are guaranteed to yield
non-negative weights whereas the 2SLS estimator is not. 
%their probability limits take the form of weighted-average treatment effects with positive weights for a larger set of data-generating processes than the corresponding set for the 2SLS estimator.
\end{abstract}

\maketitle

\section{Introduction}

We study the problem of instrumental variable estimation in the setting with
a binary treatment and a binary instrument in the presence of controls.
Numerous parametric and nonparametric instrumental variable estimators have
been proposed in the literature for this setting, and among all of them,
perhaps the most important one is the 2SLS estimator. It is simple to
compute and has straightforward motivation in the case of constant treatment
effects. However, it has been recently demonstrated by \cite{BBMT22} that in
the case of heterogeneous treatment effects, the 2SLS estimator has multiple
issues unless saturated controls\footnote{%
When controls are discrete, the vector of controls is said to be saturated if it consists of dummy
variables such that for all its realizations, one and only one dummy takes
value one.} are being used, which is rarely the case in practice. In this
paper, we propose a new instrumental variable estimator that alleviates some
of the problems of the 2SLS estimator and is as simple to compute as the
2SLS estimator itself.

Like the 2SLS estimator, our estimator consists of two steps. In the first
step, we run a logit regression of the instrument on the set of controls. In
the second step, we use a classic instrumental variable estimator for the
linear regression of the outcome on the treatment using residuals from the
logit regression calculated on the first step as an instrument. We refer to
this procedure as the \emph{logit-based instrumental variable (IV) estimator}%
.

Under the standard monotonicity (no defiers) and conditional independence
conditions, the probability limits of both 2SLS and logit-based IV
estimators consist of the sum of complier, always-taker, never-taker, and
non-causal terms. Without further conditions, complier, always-taker, and
never-taker terms take the form of weighted-average treatment effects but the advantage of the
logit-based IV estimator is that the corresponding weights in the complier
term are always non-negative, which is not necessarily the case for the 2SLS
estimator. This advantage is particularly important under additional
conditions guaranteeing that the always-taker, never-taker, and non-causal
terms vanish. Under these conditions, the probability limit of the
logit-based IV estimator is represented by a convex combination of treatment
effects for compliers and the probability limit of the 2SLS estimator is
not. Thus, under these conditions, the logit-based IV estimator has a causal
interpretation and the 2SLS estimator does not.

In addition, we develop an augmented logit-based IV estimator that has a
causal interpretation under conditions that are more plausible than those
underlying causal interpretability of the logit-based IV and 2SLS
estimators. This estimator is similar to the logit-based IV estimator itself
but contains an extra term in the logit regression used in the first step of
the logit-based IV estimator. This term in turn originates from the binary
regression model of the treatment variable on controls using a subsample of
the data corresponding to an ex ante fixed value of the instrument.

Moreover, we construct a Hausman specification test that can be used to
check some of the conditions underlying causal interpretability of the logit-based IV estimator. The
test is based on the comparison of the logit-based IV and augmented
logit-based IV estimators and is easy to perform.

We demonstrate the usefulness of our estimators by applying them to three well-established empirical studies: \citet{AE98}, \citet{ABBKK02}, and \citet{DH20}. We find that our logit-based IV and augmented logit-based IV estimates are nearly identical to the 2SLS estimates reported in \citet{AE98} and \citet{ABBKK02}, but differ substantially from the 2SLS estimates in \citet{DH20}. Our results indicate that one of the reasons for this discrepancy in the latter case is that the 2SLS estimator may be assigning negative weights to a significant fraction of compliers. Our estimators avoid this issue and are therefore preferable in settings like that in \citet{DH20}.

Our paper contributes to the large literature discussing causal
interpretability of various parametric IV estimators in the case of
heterogeneous treatment effects. We therefore provide here only a few key
references that are particularly relevant for our work. The literature has
been started by \cite{IA94}, who gave the local average treatment effect
interpretation of the instrumental variable estimator in the case of a
binary treatment and a binary instrument without controls allowing for
heterogeneous treatment effects under the monotonicity (no defiers)
assumption. \cite{AI95} showed that in a model with saturated controls, the
2SLS estimator that includes all interactions between the instrument and
controls in the first step converges in probability to a weighted average of
control-specific local average treatment effects. \cite{A03}, \cite{K13},
and \cite{S20} obtained a similar result for the same and other parametric
instrumental variable estimators without saturated controls assuming that
the conditional mean function of the instrument given controls is linear. 
\cite{BBMT22} demonstrated that parametric instrumental variable estimators
generally lack a causal interpretation if this conditional mean function is
not linear.

The rest of the paper is organized as follows. In the next section, we
introduce the logit-based IV estimator and discuss its causal
interpretation. In Section \ref{sec: augmented estimator}, we discuss the
augmented logit-based IV estimator and compare its causal interpretability
with that of the logit-based IV estimator itself. In Section \ref{sec:
asymptotic distribution}, we derive asymptotic normality results for both
estimators. In Section \ref{sec: hausman test}, we develop a Hausman test
that can be used to check causal interpretability of the logit-based IV
estimator. In Section \ref{sec: empirical applications}, we present our empirical applications. In the Appendix, we provide all the proofs.

\section{Logit-Based IV Estimator}

\label{sec: estimator}

In this section, we propose a logit-based IV estimator and explain its
advantages over the 2SLS estimator in the potential outcome model with a
binary treatment and a binary instrument in the presence of controls. In
particular, we derive a set of conditions under which our logit-based IV
estimator has a causal interpretation and the 2SLS estimator does not.

Consider the potential outcome model with a binary treatment $T\in \{0,1\}$
and a binary instrument $Z\in \{0,1\}$: 
\begin{equation}
Y=Y(1)T+Y(0)(1-T)\text{ \  \  \  \ and \  \  \  \ }T=T(1)Z+T(0)(1-Z),  \label{1}
\end{equation}
where $Y(0),Y(1)\in \mathbb{R}$ are potential outcome values and $%
T(0),T(1)\in \{0,1\}$ are potential treatment values. In addition, let $X\in%
\mathcal{X }\subset \mathbb{R}^p$ be a vector of controls, including the constant one. We will assume
that the instrument $Z$ is independent of potential values $%
(Y(0),Y(1),T(0),T(1))$ conditional on $X$, which is a standard assumption in
the program evaluation literature, e.g. see Chapter 4.5.2 in \cite{AP09}:

\begin{assumption}
\label{as: conditional independence} $Z\perp (Y(0),Y(1),T(0),T(1)) \mid X$.
\end{assumption}

In this model, the group variable $G = (T(0),T(1))$ can take four values, $%
(0,0)$, $(0,1)$, $(1,0)$, and $(1,1)$, that are typically thought to
correspond to sub-populations of never-takers (NT), compliers (CP), defiers
(DF), and always-takers (AT), respectively, and it is customary to use
letter-based values instead of digit-based values. We will follow this
tradition and will write, for example, $G=CP$ instead of $G=(0,1)$.

For each sub-population $g\in \mathcal{G }= \{NT, CP, AT, AT\}$ and each $%
x\in \mathcal{X}$, define the $x$-conditional average treatment effect 
\begin{equation*}
\Delta_g(x) = \mathbb{E}[Y(1)-Y(0)|G=g, X=x].
\end{equation*}
For any estimator $\widehat \beta$, we will say that it has a {\em causal}
interpretation if its probability limit takes the form of a weighted-average
treatment effect 
\begin{equation}  \label{eq: causal interpretation}
\sum_{g\in \mathcal{G}}\mathbb{E}[\Delta_g(X)w_g(X)],
\end{equation}
where the weights $w_g(x)$ are non-negative for all $(g,x)\in \mathcal{G}%
\times \mathcal{X}$ and integrate to one: $\sum_{g\in \mathcal{G}}\mathbb{E}%
[w_g(X)] = 1$. In other words, the estimator has a causal interpretation if
its probability limit can be represented as a convex combination of
treatment effects. We will say that an estimator has a {\em partially causal}
interpretation if takes the form of a weighted-average treatment effect %
\eqref{eq: causal interpretation} with non-negative weights that do not
necessarily integrate to one.

In addition, we will impose the monotonicity condition as in \cite{IA94}:

\begin{assumption}
\label{as: monotonicity} $\mathbb{P}(T(1)\geq T(0)) = 1$.
\end{assumption}

This assumption excludes the sub-population of defiers. In the model without
controls $X$, \cite{IA94} used this assumption to identify the LATE, the
local average treatment effect for compliers: 
\begin{equation*}
\mathrm{LATE} = \mathbb{E}[Y(1) - Y(0) | G = CP],
\end{equation*}
which is often a quantity of interest. \cite{F07} extended this result and
showed that the LATE is identified in the model with controls as well, as
long as we impose Assumption \ref{as: conditional independence} in addition
to Assumption \ref{as: monotonicity}. \cite{F07}, as well as following
papers, e.g. \cite{BCFH17}, developed nonparametric and machine learning
estimators of the LATE in the model with controls.

In practice, however, empirical researchers often prefer simple parametric
alternatives, such as the 2SLS estimator. As argued in \cite{BBMT22}, this could be problematic. Indeed, not only may the 2SLS estimator
not converge in probability to the LATE, it may not have a causal
interpretation at all. In particular, it may not take the form of a
weighted-average treatment effect and even if it does, the weights may take
negative values and may not integrate to one.

To cope with some of these problems, we propose a simple alternative to the
2SLS estimator. Let $(Y_1,T_1,X_1,Z_1),\dots,(Y_n,T_n,X_n,Z_n)$ be a random
sample from the distribution of $(Y,T,X,Z)$. Also, let 
\begin{equation*}
\Lambda(t) = \frac{\exp(t)}{1 + \exp(t)},\quad t\in \mathbb{R }
\end{equation*}
be the logit function. Our estimator, which we refer to as \emph{the
logit-based IV estimator}, takes the following form.

\begin{algorithm}[Logit-Based IV Estimator]
\label{alg: logit based} Proceed in two steps:

\begin{enumerate}
\item Run the logit estimator of $Z$ on $X$, 
\begin{equation*}
\widehat \theta = \arg \max_{\theta \in \mathbb{R}^p} \sum_{i=1}^n \Big(Z_i
X_i^{\top}\theta - \log(1 + \exp(X_i^{\top}\theta))\Big);
\end{equation*}
%and compute the residuals
%$\widetilde u_i = Z_i - \Lambda(X_i^{\top}\widehat\theta)$ for all $ i=1,\dots,n$.

\item Compute the logit-based IV estimator as the following ratio: 
\begin{equation*}
\widehat \beta_{\Lambda} = \frac{\sum_{i=1}^n Y_i (Z_i -
\Lambda(X_i^{\top}\widehat \theta))}{\sum_{i=1}^n T_i (Z_i -
\Lambda(X_i^{\top}\widehat \theta))}.
\end{equation*}
\end{enumerate}
\end{algorithm}

Note that this estimator differs from the 2SLS estimator by using the logit
regression residuals $Z_i - \Lambda(X_i^{\top}\widehat \theta)$ in the
second step instead of the linear regression residuals used by the 2SLS
estimator. For clarity of presentation, and since this will be helpful for
our discussion below, we provide the formal algorithm for the 2SLS estimator
as well.

\begin{algorithm}[2SLS Estimator]
\label{alg: 2sls} Proceed in two steps:

\begin{enumerate}
\item Run the OLS estimator of $Z$ on $X$, 
\begin{equation*}
\widehat{\gamma }=\arg \min_{\gamma \in \mathbb{R}^{p}}%
\sum_{i=1}^{n}(Z_{i}-X_{i}^{\top }\gamma )^{2};
\end{equation*}%
%
%
%
%and compute the residuals
%$\check u_i = Z_i - X_i^{\top}\widehat\gamma$ for all $i=1,\dots,n$.

\item Compute the 2SLS estimator as the following ratio: 
\begin{equation}
\widehat{\beta }_{2SLS}=\frac{\sum_{i=1}^{n}Y_{i}(Z_{i}-X_{i}^{\top }%
\widehat{\gamma })}{\sum_{i=1}^{n}T_{i}(Z_{i}-X_{i}^{\top }\widehat{\gamma })%
}.  \label{eq: alt def 2sls}
\end{equation}
\end{enumerate}
\end{algorithm}

Algorithm \ref{alg: 2sls} may not be the usual way to define the 2SLS
estimator but it is easy to verify that it does define the 2SLS estimator by
applying the Frisch-Waugh-Lovell theorem.\footnote{Notably, our
logit-based IV estimator is not related to the \textquotedblleft forbidden
regression\textquotedblright \ discussed in Chapter 4.6.1 of \cite{AP09},
which refers to the use of the fitted value $\hat{T}_{i}$\ in the second stage OLS regression of $Y_{i}$ on $\hat T_{i}$ and $X_{i}$ obtained from a non-linear regression of $T_{i}$ on $Z_{i}$ and $X_{i}$.
Instead, our approach replaces the linear predictor $X_{i}^{\top }\widehat{%
\gamma }$\ of $Z_{i}$\ by a nonlinear predictor $\Lambda (X_{i}^{\top }%
\widehat{\theta })$, so we are \textquotedblleft partialling
out\textquotedblright \ $X_{i}$\ using a nonlinear model. Importantly, under the assumption of constant treatment effects, so that $Y(1) - Y(0) = \varrho$, our logit-based IV estimator is consistent for $\varrho$ under the same conditions as those required for consistency of the 2SLS estimator, as opposed to the forbidden regression, which requires extra conditions. In particular, it is easy to check that the logit-based IV estimator is consistent for $\varrho$ under Assumption \ref{as: conditional independence} as long as the function $x\mapsto \mathbb E[Y(0)|X=x]$ is linear.\label{fn: forbidden regression}}

Without further assumptions, neither logit-based IV nor 2SLS estimators may
have a causal interpretation. To see why this is so, we first need to
introduce some additional notation. For all $x\in \mathcal{X}$, let 
\begin{equation*}
\omega_{CP}(x) = \mathbb{P}(G = CP|X=x),\  \  \omega_{AT}(x) = \mathbb{P}%
(G=AT|X=x),\  \  \omega_{NT}(x) = \mathbb{P}(G=NT|X=x)
\end{equation*}
denote the $x$-conditional fractions of compliers, always-takers, and
never-takers in the population, respectively. Also, let 
\begin{equation}  \label{eq: definition theta0}
\theta_0 = \arg \max_{\theta \in \mathbb{R}^p} \mathbb{E}[ZX^{\top}\theta -
\log(1 + \exp(X^{\top}\theta))]
\end{equation}
and 
\begin{equation}  \label{eq: definition gamma0}
\gamma_0 = \arg \min_{\gamma \in \mathbb{R}^p} \mathbb{E}[(Z -
X^{\top}\gamma)^2]
\end{equation}
be the probability limits of the estimators $\widehat \theta$ and $\widehat
\gamma$ appearing in Algorithms \ref{alg: logit based} and \ref{alg: 2sls}.
Moreover, for all $x\in \mathcal{X}$, let 
\begin{equation*}
h_{\Lambda}(x) = \Lambda(x^{\top}\theta_0) \quad \text{and}\quad h_{2SLS}(x)
= x^{\top}\gamma_0.
\end{equation*}
The following theorem derives the probability limits of both $\widehat
\beta_{\Lambda}$ and $\widehat \beta_{2SLS}$.

\begin{theorem}
\label{thm: probability limits} Suppose that Assumptions \ref{as:
conditional independence} and \ref{as: monotonicity} are satisfied. Then
under appropriate regularity conditions,\footnote{%
To avoid distractions, we provide the list of regularity conditions for this
theorem and all other results in the Appendix.} for any $s\in[0,1]$ and $%
a\in \{ \Lambda,2SLS\}$, we have that $\widehat \beta_{a}\to_p \beta_a$,
where 
\begin{align}
\beta_a & =\frac{\mathbb{E}[\Delta_{CP}(X)\omega_{CP}(X)(s\mathbb{E}[Z|X] +
(1-s)h_a(X) - h_a(X)\mathbb{E}[Z|X])]}{\mathbb{E}[T(Z - h_a(X))]}
\label{eq: complier term} \\
& \quad+\frac{s\mathbb{E}[\Delta_{AT}(X)\omega_{AT}(X)(\mathbb{E}%
[Z|X]-h_a(X))]}{\mathbb{E}[T(Z - h_a(X))]}  \label{eq: always taker term} \\
& \quad+\frac{(1-s)\mathbb{E}[\Delta_{NT}(X)\omega_{NT}(X)(h_a(X) - \mathbb{E}%
[Z|X])]}{\mathbb{E}[T(Z - h_a(X))]}  \label{eq: never taker term} \\
& \quad+\frac{\mathbb{E}[\mathbb{E}[sY(0) + (1-s)Y(1)|X](\mathbb{E}%
[Z|X]-h_a(X))]}{\mathbb{E}[T(Z - h_a(X))]}  \label{4}
\end{align}
and
\begin{align}
& \mathbb{E}[T(Z - h_a(X))] 
 = \mathbb{E}[\omega_{CP}(X)(s\mathbb{E}[Z|X] +
(1-s)h_a(X) - h_a(X)\mathbb{E}[Z|X])]  \notag\\
& \qquad\qquad + s\mathbb{E}[\omega_{AT}(X)(\mathbb{E}%
[Z|X]-h_a(X))] + (1-s)\mathbb{E}[\omega_{NT}(X)(h_a(X) - \mathbb{E}%
[Z|X])].\label{eq: denominator general formula}
\end{align}
\end{theorem}

The proof of this theorem, as well as of all other results in the main text,
is rather simple and is provided in the Appendix. In fact, the expression
for the probability limit of the 2SLS estimator $\widehat \beta_{2SLS}$ in
this theorem is closely related to that in Proposition 1 of \cite{BBMT22}.

Theorem \ref{thm: probability limits} shows that without further
assumptions, the probability limits of both logit-based IV and 2SLS
estimators take the form of a sum of complier, always-taker, never-taker,
and non-causal terms, appearing in expressions \eqref{eq: complier term}, %
\eqref{eq: always taker term}, \eqref{eq: never taker term}, and \eqref{4},
respectively. The last term is referred to here as a non-causal term because
it does not take the form of a functional of the treatment effect $Y(1)-Y(0)$%
. In particular, this term depends non-trivially on the level of potential
outcomes $Y(0)$ and $Y(1)$. The representation for the probability limits
given in Theorem \ref{thm: probability limits} is not unique as different
values of the parameter $s$ give different representations. For example, it
is always possible to get rid of the never-taker term by substituting $s=1$
and it is always possible to get rid of the always-taker term by
substituting $s = 0$. However, without further assumptions, it is not
possible to get rid of both always-taker and never-taker terms at the same
time.

Theorem \ref{thm: probability limits} identifies at least two problems
with both logit-based IV and 2SLS estimators. First, the presence of the
non-causal term means that neither the logit-based IV nor the 2SLS estimator in
general converges in probability to a weighted-average treatment effect.
Second, the always-taker term in \eqref{eq: always taker term} shows that
the $x$-conditional average treatment effect for always-takers $%
\Delta_{AT}(x)$ has the weight 
$$
\frac{\omega_{AT}(x)(\mathbb{E}[Z|X=x] - h_a(x))}{\mathbb{E}[T(Z - h_a(x))]},
$$
which may be negative for both logit-based IV
and 2SLS estimators, and the same applies to the never-taker term in %
\eqref{eq: never taker term}. In addition, although the weights in the causal terms \eqref{eq: complier term}, \eqref{eq: always taker term}, and \eqref{eq: never taker term} do integrate to one, as we show below (Corollary \ref{cor: less simplifications}), eliminating the non-causal term in \eqref{4} may also cause parts of the causal terms to drop out. In such cases, the resulting probability limit takes the form of a weighted-average treatment effect, but the weights no longer integrate to one. These are all the problems discussed in  \cite{BBMT22} in the case of the 2SLS estimator.

Theorem \ref{thm: probability limits} also identifies the key advantage of
the logit-based IV estimator in comparison with the 2SLS estimator: for the
former, all the weights in the complier term are non-negative and
this is not necessarily the case for the latter. %Indeed, the weights in the complier term are proportional to 
%$$
%s\mathbb{E}[Z|X] + (1-s)h_a(X) - h_a(X)\mathbb{E}[Z|X] = s\mathbb E[Z|X](1-h_a(X)) + (1-s)h_a(X)(1-\mathbb{E}[Z|X]).
%$$
%When $a=\Lambda$, we necessarily have $h_a(X)\in(0,1)$ guaranteeing that the expression above is non-negative. When $a=2SLS$, however, it could be the case that both $1-h_a(X)<0$ and $h_a(X)<0$ occur with positive probability, and whenever this happens, there will be a non-trivial fraction of the compliers with a negative weight, regardless of what value of $s$ we are using in Theorem \ref{thm: probability limits}. 
To see this advantage of the logit-based IV estimator more clearly, we now impose two additional assumptions that ensure  that the always-taker, never-taker, and non-causal terms drop out and the weights still integrate to one.

\begin{assumption}
\label{as: linear means} For some vector $\eta_0\in \mathbb{R}^p$, we have $%
\mathbb{E}[(Y(1) - Y(0))T(0) + Y(0)|X] = X^{\top}\eta_0$ with probability
one.
\end{assumption}

As we will see below, this assumption ensures that the always-taker, never-taker, and non-causal terms drop out from the general formulas in Theorem \ref{thm: probability limits}.
Although non-standard, this assumption seems rather attractive. Indeed, it
specifies a linear regression model for the conditional mean of $(Y(1) -
Y(0))T(0) + Y(0)$ given $X$, and linear regression models have a long
tradition in econometrics. In empirical work, even if regression functions
are not believed to be exactly linear, they are believed to be approximately
linear. In this sense, Assumption \ref{as: linear means} is in line with a
traditional regression analysis in economics. Moreover, as long as the
conditional mean function $x\mapsto \mathbb{E}[(Y(1) - Y(0))T(0) + Y(0)|X=x]$
is continuous, it can be well approximated by a linear combination of, say,
polynomial transformations of $x$. In such a case, Assumption \ref{as:
linear means} can be made more plausible if we replace $X$ by a set of
polynomial, or other technical, transformations of $X$. In practice, this
amounts to replacing all the $X_i$ by, say, the $q(X_i)$, where $q(\cdot) =
(q_1(\cdot),\dots,q_k(\cdot))^{\top}$ is a vector of corresponding
transformations. Moreover, under Assumption \ref{as: conditional
independence}, 
\begin{align*}
\mathbb{E}[(Y(1) - Y(0))T(0)+Y(0)|X] & = \mathbb{E}[(Y(1) -
Y(0))T(0)+Y(0)|X, Z = 0] \\
& = \mathbb{E}[Y|X,Z=0],
\end{align*}
which implies that Assumption \ref{as: linear means} is testable; se Remark \ref{rem: testing} below for details.

%In addition, it will be helpful to consider the following assumption.

\begin{assumption}
\label{as: linear first stage} For some vector $\psi_0\in \mathbb{R}^p$, we
have $\mathbb{E}[T(0)|X]=X^{\top}\psi_0$ with probability one.
\end{assumption}

This assumption ensures that the weights in Theorem \ref{thm: probability limits} integrate to one despite the fact that some terms in the theorem drop out because of Assumption \ref{as: linear means}. It specifies a linear regression model for the conditional mean
of $T(0)$ given $X$. Given that $T(0)$ is a binary random variable, the
conditional mean $\mathbb{E}[T(0)|X]$ takes values in the $(0,1)$ interval,
and so this assumption may be less plausible than Assumption \ref{as: linear
means}. However, we will use this assumption mainly to make the comparison
between the logit-based IV and 2SLS estimators particularly transparent.
Without this assumption, our logit-based IV estimator still has a
partially causal interpretation as we will explain in Corollary \ref{cor: less simplifications} below. Also, under Assumption \ref{as: conditional
independence}, 
\begin{equation*}
\mathbb{E}[T(0)|X] = \mathbb{E}[T(0)|X,Z=0] = \mathbb{E}[T|X,Z=0],
\end{equation*}
which implies that Assumption \ref{as: linear first stage} is testable as
well. In addition, like Assumption \ref{as: linear means}, it can be made
more plausible if we replace $X$ by a set of appropriate transformations of $%
X$. Finally, we will discuss in the next section how one can modify the
logit-based IV estimator to accommodate more plausible versions of
Assumption \ref{as: linear first stage}.

By combining Theorem \ref{thm: probability limits} with Assumptions \ref{as:
linear means} and \ref{as: linear first stage}, we obtain the following
corollary.

\begin{corollary}
\label{cor: simplifications} Suppose that Assumptions \ref{as: conditional
independence}, \ref{as: monotonicity}, \ref{as: linear means}, and \ref{as:
linear first stage} are satisfied. Then under appropriate regularity
conditions, the probability limits $\beta_{\Lambda}$ and $\beta_{2SLS}$
appearing in Theorem \ref{thm: probability limits} take the following form: 
\begin{equation*}
\beta_{\Lambda} = \mathbb{E}[\Delta_{CP}(X)w_{\Lambda}(X)]\quad \text{and}%
\quad \beta_{2SLS} = \mathbb{E}[\Delta_{CP}(X)w_{2SLS}(X)],
\end{equation*}
where 
\begin{equation*}
w_{a}(x) = \frac{\omega_{CP}(x)\mathbb{E}[Z|X=x](1 -
h_a(x))}{\mathbb{E}[\omega_{CP}(X)\mathbb{E}[Z|X](1 -
h_a(X))]}
\end{equation*}
for all $x\in \mathcal{X}$ and $a\in\{\Lambda,2SLS\}$.
\end{corollary}

This corollary provides a clean comparison between logit-based IV and 2SLS
estimators. It shows that under our assumptions, both estimators converge in
probability to weighted-average treatment effects for compliers, and the
weights do integrate to one: 
\begin{equation*}
\mathbb{E}[w_{\Lambda}(X)] = \mathbb{E}[w_{2SLS}(X)]=1.
\end{equation*}
However, in the case of the 2SLS estimator, some of the weights may take
negative values, which happens whenever $h_{2SLS}(X) = X^{\top}\gamma_0$ exceeds one with positive probability. At the same time, the weights of the logit-based IV estimator are always
non-negative, as the logit function $\Lambda(\cdot)$ takes values in the $%
(0,1)$ interval. Thus, under our assumptions, the logit-based IV estimator
has a causal interpretation and the 2SLS estimator does not. This explains
the main advantage of the logit-based IV estimator relative to the
traditionally used 2SLS estimator.

We now discuss two extensions of Corollary \ref{cor: simplifications}.
First, without imposing Assumption \ref{as: linear first stage}, we obtain
somewhat more convoluted expressions for the probability limits of the
logit-based IV and 2SLS estimators, which are nonetheless useful to make
comparisons between these two estimators.

\begin{corollary}
\label{cor: less simplifications} Suppose that Assumptions \ref{as:
conditional independence}, \ref{as: monotonicity}, and \ref{as: linear means}
are satisfied. Then under appropriate regularity conditions, the probability
limits $\beta_{\Lambda}$ and $\beta_{2SLS}$ appearing in Theorem \ref{thm:
probability limits} take the following form: 
\begin{equation*}
\beta_{\Lambda} = \mathbb{E}[\Delta_{CP}(X)w_{\Lambda}(X)]\quad \text{and}%
\quad \beta_{2SLS} = \mathbb{E}[\Delta_{CP}(X)w_{2SLS}(X)],
\end{equation*}
where 
\begin{equation}  \label{eq: logit weights do not sum to one}
w_{a}(x) = \frac{\omega_{CP}(x)\mathbb{E}[Z|X=x](1 -
h_a(x))}{\mathbb{E}[\omega_{CP}(X)\mathbb{E}[Z|X](1 -
h_a(X))] + \mathbb{E}[\omega_{AT}(X)(\mathbb{E}[Z|X] -
h_a(X))]}
\end{equation}
for all $x\in \mathcal{X}$ and $a\in\{\Lambda,2SLS\}$.
\end{corollary}

This corollary shows that without imposing Assumption \ref{as: linear first
stage}, both logit-based IV and 2SLS estimators still converge in
probability to weighted-average treatment effects for compliers but the
weights now do not integrate to one. On the other hand, as long as the
fraction of always-takers is not too large relative to the fraction of
compliers, so that the denominators in 
\eqref{eq: logit weights do not sum to one} remain non-negative
for all $x\in \mathcal{X}$, the main advantage of the logit-based IV
estimator remains valid: its corresponding weights are still non-negative
and the estimator has a partially causal interpretation, whereas the weights
of the 2SLS estimator may be negative and the estimator does not have a
partially causal interpretation.

It is also worth noting that, for both the logit-based IV and the 2SLS estimators, the weights may sum to more or less than one. Indeed, inspection of the formula in \eqref{eq: logit weights do not sum to one} shows that, for both $a=\Lambda$ and $a=2SLS$, we would have $\mathbb{E}[w_a(X)] = 1$ if it were the case that 
$$
\mathbb{E}[\omega_{AT}(X)(\mathbb{E}[Z|X] - h_a(X))] = 0.
$$
However, in both cases, 
$$
\mathbb{E}[\mathbb{E}[Z|X] - h_a(X)] = \mathbb E[Z - h_a(X)] = 0,
$$
and so the term $\mathbb{E}[\omega_{AT}(X)(\mathbb{E}[Z|X] - h_a(X))]$ may be either positive or negative. It follows that the failure of the weights to integrate to one can generate both attenuation and amplification biases.

Second, we can relax Assumptions \ref{as: linear means} and \ref{as: linear
first stage} without losing causal interpretability of the logit-based IV
estimator. Indeed, consider the following assumption.

\begin{assumption}
\label{as: relaxed conditions} For some constant $s\in[0,1]$ and some
vectors $\eta_0$ and $\psi_0$ in $\mathbb{R}^p$, we have 
\begin{equation}  \label{eq: relaxed linear means}
\mathbb{E}\Big[(Y(1) - Y(0))(sT(0)+(1-s)T(1))+Y(0)|X\Big] = X^{\top}\eta_0
\end{equation}
and 
\begin{equation}  \label{eq: relaxed linear first stage}
\mathbb{E}[sT(0) + (1-s)T(1)|X] = X^{\top}\psi_0
\end{equation}
with probability one.
\end{assumption}

This assumption relaxes Assumptions \ref{as: linear means} and \ref{as:
linear first stage} as it reduces to Assumptions \ref{as: linear means} and %
\ref{as: linear first stage} if we plugin $s=1$. It requires that there
exists a linear combination of the conditional mean functions $x\mapsto 
\mathbb{E}[(Y(1)-Y(0))T(1)+Y(0)|X=x]$ and $x\mapsto \mathbb{E}%
[(Y(1)-Y(0))T(0)+Y(0)|X=x]$ that is linear and that the linear combination
of the conditional mean functions $x\mapsto \mathbb{E}[T(1)|X=x]$ and $%
x\mapsto \mathbb{E}[T(0)|X=x]$ with the same weights is linear as well. All
the comments we made about Assumptions \ref{as: linear means} and \ref{as:
linear first stage} apply to this assumption as well. In particular, one can
show that equations \eqref{eq: relaxed linear means} and 
\eqref{eq: relaxed
linear first stage} can be equivalently rewritten as 
\begin{equation*}
s\mathbb{E}[Y|X,Z=0] + (1-s)\mathbb{E}[Y|X,Z=1] = X^{\top}\eta_0
\end{equation*}
and 
\begin{equation*}
s\mathbb{E}[T|X,Z=0] + (1-s)\mathbb{E}[T|X,Z=1] = X^{\top}\psi_0,
\end{equation*}
respectively. Thus, Assumption \ref{as: relaxed conditions} is testable.

\begin{corollary}
\label{cor: main result} Suppose that Assumptions \ref{as: conditional
independence}, \ref{as: monotonicity}, and \ref{as: relaxed conditions} are
satisfied. Then under appropriate regularity conditions, the probability
limits $\beta_{\Lambda}$ and $\beta_{2SLS}$ appearing in Theorem \ref{thm: probability limits}
take the following form: 
\begin{equation}\label{eq: non ideal but good form}
\beta_{\Lambda} = \mathbb{E}[\Delta_{CP}(X)w_{\Lambda,s}(X)]\quad\text{and}\quad\beta_{2SLS} = \mathbb{E}[\Delta_{CP}(X)w_{2SLS,s}(X)]
\end{equation}
where 
\begin{equation}
w_{a,s}(x) = \frac{\omega_{CP}(x)\Big( s\mathbb{E}[Z|X=x] +
(1-s)h_a(x) - h_a(x)\mathbb{E}[Z|X=x] %
\Big)}{\mathbb{E}\Big[\omega_{CP}(X)\Big( s\mathbb{E}[Z|X] +
(1-s)h_a(X) - h_a(X)\mathbb{E}[Z|X] %
\Big)\Big]}\label{eq: general formula for weights relaxed}
\end{equation}
for all $x\in \mathcal{X}$ and $a\in\{\Lambda,2SLS\}$.
\end{corollary}

This corollary shows that under Assumptions \ref{as: conditional
independence} and \ref{as: monotonicity}, Assumption \ref{as: relaxed
conditions} is sufficient for causal interpretability of the logit-based IV
estimator. The weights, however, take a more complicated form than those in
Corollary \ref{cor: simplifications}. In comparison with Corollary \ref{cor: simplifications}, this corollary also shows that the weights for the 2SLS estimator may be negative not only when $\mathbb P(X^\top\gamma_0 >1)>0$ but also when $\mathbb P(X^\top\gamma_0 < 0) > 0$ (the latter case is relevant when $s$ in Assumption \ref{as: relaxed conditions} is close to zero). In practice, we therefore recommend calculating the fractions of observations $i$ with $X_i^\top\widehat\gamma > 1$ and with $X_i^\top\widehat\gamma < 0$ and disregarding the 2SLS estimator in favor of the logit-based IV estimator whenever at least one fraction is non-trivial.

%Like in Corollary \ref%
%{cor: simplifications}, the weights for the 2SLS estimator may take negative
%values, which happens if $1-X^\top\gamma_0<0$ and $X^\top\gamma_0<0$ occur with positive %probability.

Finally, we demonstrate that the logit-based IV estimator
has a causal interpretation even if Assumption \ref{as: relaxed conditions}
is not satisfied as long as it consistently estimates the conditional mean
function $x\mapsto \mathbb{E}[Z|X=x]$ on the first step. Indeed, consider the following
assumption.

\begin{assumption}
\label{as: logit correct specification} The conditional mean function $%
x\mapsto \mathbb{E}[Z|X=x]$ takes the logit form, i.e. $\mathbb{E}[Z|X] =
\Lambda(X^{\top}\theta_0)$ with probability one.
\end{assumption}

Note that in general, the assumption that the conditional mean function $%
x\mapsto \mathbb{E}[Z|X=x]$ takes the logit form means that there exists
some $\theta \in \mathbb{R}^p$ such that $\mathbb{E}[Z|X] =
\Lambda(X^{\top}\theta)$ with probability one. However, given the definition
of $\theta_0$ in \eqref{eq: definition theta0}, this $\theta$ should be
equal to $\theta_0$, and so we simply assume that $\mathbb{E}[Z|X] =
\Lambda(X^{\top}\theta_0)$.

\begin{corollary}
\label{cor: true mean} Suppose that Assumptions \ref{as: conditional
independence}, \ref{as: monotonicity}, and \ref{as: logit correct
specification} are satisfied. Then under appropriate regularity conditions,
the probability limit $\beta_{\Lambda}$ appearing in Theorem \ref{thm:
probability limits} takes the following form: 
\begin{equation}\label{eq: ideal form}
\beta_{\Lambda} = \mathbb{E}[\Delta_{CP}(X)w_{0}(X)],
\end{equation}
where 
\begin{equation}  \label{eq: true weights}
w_{0}(x) = \frac{\omega_{CP}(x)\mathbb{E}[Z|X=x](1 - \mathbb{E}[Z|X=x])}{%
\mathbb{E}[\omega_{CP}(X)\mathbb{E}[Z|X](1 - \mathbb{E}[Z|X])]}
\end{equation}
for all $x\in \mathcal{X}$.
\end{corollary}

Together with Corollary \ref{cor: main result}, this corollary shows that
our logit-based IV estimator has a double robustness property, meaning that
it has a causal interpretation if at least one of two conditions holds:
either Assumption \ref{as: relaxed conditions} or Assumption \ref{as: logit
correct specification} is satisfied. Interestingly, however, the weights $%
w_{\Lambda,s}(\cdot)$ and $w_0(\cdot)$ appearing in Corollaries \ref{cor:
main result} and \ref{cor: true mean} may be different, which means that
even though we have a causal interpretation in both cases, the interpretation of the parameter we
are estimating depends on which condition is being satisfied.

\begin{remark}\label{eq: comparison between logit and LS 1}
{\normalfont When studying the 2SLS estimator, researchers often assume that
the conditional mean function $x\mapsto \mathbb{E}[Z|X=x]$ takes the linear
form, e.g. see \cite{A03}, \cite{K13}, and \cite{S20}. In particular, with
Assumptions \ref{as: conditional independence} and \ref{as: monotonicity}
being maintained, as discussed in the Introduction, under this linear form
condition, the 2SLS estimator has a causal interpretation and, in fact, its probability limit is $\beta_{2SLS} =  \mathbb{E}[\Delta_{CP}(X)w_{0}(X)]$ with weights $w_0$ given by \eqref{eq: true weights}. It is therefore
useful to compare this linear form condition with our logit form condition
in Assumption \ref{as: logit correct specification}. With saturated
controls, the linear form condition and the logit form condition are both
satisfied, and so logit-based IV and 2SLS estimators consistently estimate the same quantity and both have a causal interpretation. Without saturated controls, however, it is unlikely that
both conditions are satisfied simultaneously. In this case, given that $Z$
is a binary random variable, so that the conditional mean function $x\mapsto 
\mathbb{E}[Z|X=x]$ takes values in the $(0,1)$ interval, the logit form
condition seems more plausible than the linear form condition, and so our
logit-based IV estimator is more likely to have a causal interpretation than
the 2SLS estimator.} \qed
\end{remark}

\begin{remark}\label{eq: comparison between logit and LS 2}
{\normalfont We now explain why we focus on the logit link function in our logit-based IV estimator rather than considering a general class of link functions. For the purposes of this remark, we assume that the function $x\mapsto \mathbb E[Z|X=x]$ takes neither linear not logit form; otherwise we fall back on the previous remark. 

Observe first that by the proof of Corollary \ref{cor: main result}, the formulas for $\beta_a$ and $\mathbb E[T(Z - h_a(X))]$ in Theorem \ref{thm: probability limits} can be rewritten as
\begin{align}
\beta_{a} & =\frac{\mathbb{E}[\Delta_{CP}(X)\omega_{CP}(X)(s\mathbb{E}[Z|X] + (1-s)h_a(X) - h_a(X)\mathbb{E}[Z|X])]}{\mathbb{E}[T(Z - h_a(X))]}
\label{eq: cor 2.3 exp 4 1} \\
& \quad +\frac{\mathbb{E}[\mathbb{E}[(Y(1) - Y(0))(sT(0)+(1-s)T(1)) + Y(0)|X](%
\mathbb{E}[Z|X] - h_a(X))]}{\mathbb{E}[T(Z - h_a(X))]}.  \label{eq: cor 2.3 exp new 2}
\end{align}
and
\begin{align}
 \mathbb{E}[T(Z - h_a(X))] 
& = \mathbb{E}[\omega_{CP}(X)(s\mathbb{E}[Z|X] +
(1-s)h_a(X) - h_a(X)\mathbb{E}[Z|X])]  \label{eq: denominator general formula 2 3}\\
& \quad + \mathbb{E}[(\mathbb E[sT(0)+ (1-s)T(1)|X] + s-1)(\mathbb{E}[Z|X]-h_a(X))],\label{eq: denominator general formula 3 4}
\end{align}
where the term in \eqref{eq: cor 2.3 exp new 2} equals the sum of the terms in \eqref{eq: always taker term}, \eqref{eq: never taker term}, and \eqref{4}. Also, since $\mathbb P(\mathbb E[Y|X]\neq h_{2SLS}(X))>0$ and $\mathbb E[\mathbb E[Z|X] - h_{2SLS}(X)] = \mathbb E[Z - h_{2SLS}(X)] = 0$ by the first-order conditions for the optimization problem \eqref{eq: definition gamma0} and the fact that $X$ includes the constant one, it follows that both $\mathbb P(\mathbb E[Z|X] -h_{2SLS}(X)>0)$ and $\mathbb P(\mathbb E[Z|X] -h_{2SLS}(X)<0)$ are strictly positive. Thus, some treatment effects in \eqref{eq: always taker term} and \eqref{eq: never taker term} have negative weights unless it somehow happens that $\Delta_{AT}(X)\omega_{AT}(X) = 0$ whenever $\mathbb E[Z|X] -h_{2SLS}(X)<0$ and $\Delta_{NT}(X)\omega_{NT}(X) = 0$ whenever $\mathbb E[Z|X] -h_{2SLS}(X)>0$, which may not be plausible. Therefore, ensuring that the 2SLS estimator has a causal interpretation essentially requires that the terms in \eqref{eq: cor 2.3 exp new 2} and \eqref{eq: denominator general formula 3 4} both vanish,\footnote{The condition that the term in \eqref{eq: denominator general formula 3 4} vanishes ensures that the weights integrate to one after dropping the term in \eqref{eq: cor 2.3 exp new 2}.} which in turn essentially requires that Assumption \ref{as: relaxed conditions} holds.\footnote{The terms in \eqref{eq: cor 2.3 exp new 2} and \eqref{eq: denominator general formula 3 4} may vanish even if Assumption \ref{as: relaxed conditions} does not hold but then we would need that $\mathbb E[f(X)(Z - h_{2SLS}(X))] = 0$ for some non-linear functions $f$, which may not be plausible.} Hence, the essentially necessary (but not sufficient) condition for the 2SLS estimator to have a causal interpretation is that Assumption \ref{as: relaxed conditions} holds. On the other hand, as follows from Corollary \ref{cor: main result}, the same assumption {\em is} sufficient for the logit-based IV estimator to have a causal interpretation. In this sense, the logit-based IV estimator improves upon the 2SLS estimator. Intuitively, this happens because the optimization problems in \eqref{eq: definition theta0} and \eqref{eq: definition gamma0} have first-order conditions of the same form, namely $\mathbb E[X(Z - h_a(X))] = 0$. If we were to use a different link function, giving the resulting estimator causal interpretability would require imposing some non-linear functional forms in Assumption \ref{as: relaxed conditions}. In that case the resulting estimator would be different from the 2SLS estimator but would not improve upon it.
}
\qed
\end{remark}

\begin{remark}\label{rem: overweighting}
{\normalfont
Since the weights $w_0(x)$ in \eqref{eq: true weights} are proportional to $\textrm{Var}(Z|X=x) = \mathbb E[Z|X=x](1-\mathbb E[Z|X=x])$, it follows that under Assumption \ref{as: logit correct specification}, units with ``more balanced assignment to treatment'' (those that have $\mathbb E[Z|X]$ closer to $1/2$) are over-weighted relative to those with ``less balanced assignment to treatment.'' On the other hand, under Assumption \ref{as: relaxed conditions}, we should replace the weights $w_0(x)$ by the weights $w_{\Lambda,s}(x)$ in \eqref{eq: general formula for weights relaxed}, and the formula in \eqref{eq: general formula for weights relaxed} is more complicated. However, as long as the function $x\mapsto h_{\Lambda}(x)$ does not deviate too much from the function $x\mapsto \mathbb{E}[Z|X=x]$, the qualitative comparison remains the same: units with ``more balanced assignment to treatment'' are relatively over-weighted. This is consistent with well-documented weighting properties of the 2SLS estimator.
}
\qed
\end{remark}

\begin{remark}\label{rem: sign reversal}
{\normalfont
We emphasize that although our logit-based IV estimator has the desirable property that its probability limit can be written, under transparent conditions, as a convex combination of treatment effects, this feature may not be sufficient in some applications. In particular, when treatment effects are heterogeneous in sign, even a convex combination may be positive or negative depending on which units receive greater weight. In such settings, a single scalar summary can obscure economically meaningful sign reversals. A useful complementary approach is therefore to partition the support of $X$
and report the corresponding estimates separately across subsets of this partition. Comparing estimates across these groups can serve as an informal diagnostic tool for detecting potential sign reversals in treatment effects.
}
\qed
\end{remark}

\begin{remark}\label{rem: multiple instruments}
{\normalfont
To conclude this section, we briefly discuss the case with multiple binary instruments. In this case, different instruments generically identify different complier groups and hence different causal effects. Any attempt to aggregate them into a single “overall” IV estimand necessarily requires additional structure and delivers a parameter whose weighting depends on the chosen combination of instruments. For transparency, we view it as preferable to report instrument-specific estimates (i.e., running our logit-based IV estimator separately for each instrument), rather than a single aggregated estimate, unless one is willing to adopt additional assumptions that justify aggregation.
}
\qed
\end{remark}

\section{Augmented Logit-Based IV Estimator}

\label{sec: augmented estimator}

In this section, we replace Assumption \ref{as: linear first stage} by a
more plausible assumption and show how one can modify the logit-based IV
estimator in order to obtain an estimator that still has a causal
interpretation. Our modified estimator is similar to the original
logit-based IV estimator but includes an extra covariate in the logit
regression of $Z$ on $X$.

Let $\Phi (\cdot )$ be a function mapping $\mathbb{R}$ to $[0,1]$ and
consider the following alternative to Assumption \ref{as: linear first stage}%
.

\begin{assumption}
\label{as: nonlinear first stage} For some vector $\psi_0\in \mathbb{R}^p$,
we have $\mathbb{E}[T(0)|X] = \Phi(X^{\top}\psi_0)$ with probability one.
\end{assumption}

If we set $\Phi (t)=\min (\max (0,t),1)$ for all $t\in \mathbb{R}$, then
Assumption \ref{as: nonlinear first stage} relaxes Assumption \ref{as:
linear first stage}. Indeed, in this case, two assumptions are the same if
the support of $X^{\top }\psi _{0}$ is contained in the $[0,1]$ interval but
Assumption \ref{as: nonlinear first stage} does not actually require the
support of $X^{\top }\psi _{0}$ to be contained in the $[0,1]$ interval. We
are, however, primarily interested in the cases where the function $\Phi
(\cdot )$ is nonlinear and smooth, e.g. $\Phi (\cdot )$ takes the logit or the
probit form. In these cases, Assumption \ref{as: nonlinear first stage}
seems more plausible than Assumption \ref{as: linear first stage}, as the
function $x\mapsto \Phi (x^{\top }\psi _{0})$ is smooth and automatically
satisfies the constraint that the conditional mean of $T(0)$ given $X$ takes
values in the $(0,1)$ interval, without restricting the support of $X$. Like
Assumptions \ref{as: linear means} and \ref{as: linear first stage},
Assumption \ref{as: nonlinear first stage} is testable and can be made more
plausible if we replace $X$ by a set of appropriate transformations of $X$.

As it turns out, we can modify our logit-based IV estimator in a way so that
it has a causal interpretation even if we replace Assumption \ref{as: linear
first stage} by Assumption \ref{as: nonlinear first stage}. The
modification, which yields \emph{the augmented logit-based IV estimator}, is
explained in the algorithm below.

\begin{algorithm}[Augmented Logit-Based IV Estimator]
\label{alg: modified logit based} Proceed in three steps:

\begin{enumerate}
\item Run the following maximum likelihood estimator using the data with $Z=0$ only, 
\begin{equation*}
\widehat{\psi }=\arg \max_{\psi \in \mathbb{R}^{p}}\sum_{i=1}^{n}\mathds%
1\{Z_{i}=0\}\Big(T_{i}\log (\Phi (X_{i}^{\top }\psi ))+(1-T_{i})\log (1-\Phi
(X_{i}^{\top }\psi ))\Big)
;
\end{equation*}

\item Run the logit estimator of $Z$ on $X$ and $\Phi (X^{\top }\widehat{%
\psi })$, 
\begin{equation*}
(\widehat{\theta },\widehat{\kappa })=\arg \max_{\theta \in \mathbb{R}%
^{p},\kappa \in \mathbb{R}}\sum_{i=1}^{n}\Big(Z_{i}(X_{i}^{\top }\theta +%
\widehat{C}_{i}\kappa )-\log (1+\exp (X_{i}^{\top }\theta +\widehat{C}%
_{i}\kappa ))\Big),
\end{equation*}%
where we denoted $\widehat{C}_{i}=\Phi (X_{i}^{\top }\widehat{\psi })$ for
all $i=1,\dots ,n$;

\item Compute the augmented logit-based IV estimator as the following ratio:%
\begin{equation*}
\widehat{\beta }_{A\Lambda }=\frac{\sum_{i=1}^{n}Y_{i}(Z_{i}-\Lambda
(X_{i}^{\top }\widehat{\theta }+\widehat{C}_{i}\widehat{\kappa }))}{%
\sum_{i=1}^{n}T_{i}(Z_{i}-\Lambda (X_{i}^{\top }\widehat{\theta }+\widehat{C}%
_{i}\widehat{\kappa }))}.
\end{equation*}
\end{enumerate}
\end{algorithm}

This estimator requires that we know the link function $\Phi$ in the
single-index structure $\Phi(X^{\top}\psi_0)$ used to model the conditional
mean of $T(0)$ given $X$. However, from a practical point of view, different
link functions, such as logit or probit, often lead to similar results, and
the researcher can always check whether it is indeed the case by trying
several link functions. Also, as pointed out above, the researcher can test
whether a particular link function is consistent with the data. Finally, the
researcher can estimate the link function $\Phi$, as described in Chapter 2
of \cite{H09}, for example.

To derive the probability limit of this estimator, define 
\begin{equation*}
\overline \psi_0 = \arg \max_{\psi \in \mathbb{R}^p}\mathbb{E}\Big[\mathds %
1\{Z = 0\} \Big(T\log(\Phi(X^{\top}\psi)) + (1-T)\log(1 - \Phi(X^{\top}\psi))%
\Big)\Big].
\end{equation*}
Note that $\bar \psi_0$ equals $\psi_0$ if Assumption \ref{as: nonlinear
first stage} is satisfied, but may be different from $\psi_0$ otherwise.
Also, define 
\begin{equation}  \label{eq: augmented logit}
(\overline \theta_0,\overline \kappa_0) = \arg \max_{\theta \in \mathbb{R}%
^p, \kappa \in \mathbb{R}} \mathbb{E}[Z(X^{\top}\theta +
\Phi(X^{\top}\overline \psi_0) \kappa) - \log(\exp(X^{\top}\theta +
\Phi(X^{\top}\overline \psi_0) \kappa))].
\end{equation}
Moreover, define 
\begin{equation*}
h_{A\Lambda}(x) = \Lambda(x^{\top}\overline \theta_0 +
\Phi(x^{\top}\overline \psi_0)\overline \kappa_0)
\end{equation*}
for all $x\in \mathcal{X}$. The following theorem provides the probability
limit of the estimator $\widehat \beta_{A\Lambda}$ imposing only Assumptions %
\ref{as: conditional independence} and \ref{as: monotonicity}.

\begin{theorem}
\label{thm: augmented estimator general} Suppose that Assumptions \ref{as:
conditional independence} and \ref{as: monotonicity} are satisfied. Then
under appropriate regularity conditions, we have that $\widehat
\beta_{A\Lambda}\to_p\beta_{A\Lambda}$, where 
\begin{align*}
\beta_{A\Lambda} & =\frac{\mathbb{E}[\Delta_{CP}(X)\omega_{CP}(X)\mathbb{E}%
[Z|X](1- h_{A\Lambda}(X))]}{\mathbb{E}[T(Z - h_{A\Lambda}(X))]} \\
& \text{ \  \ }+\frac{\mathbb{E}[\Delta_{AT}(X)\omega_{AT}(X)(\mathbb{E}%
[Z|X]-h_{A\Lambda}(X))]}{\mathbb{E}[T(Z - h_{A\Lambda}(X))]} \\
& \text{ \  \ }+\frac{\mathbb{E}[\mathbb{E}[Y(0)|X](\mathbb{E}%
[Z|X]-h_{A\Lambda}(X))]}{\mathbb{E}[T(Z - h_{A\Lambda}(X))]}.
\end{align*}
\end{theorem}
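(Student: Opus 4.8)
The plan is to reduce the claim to the probability limit
\[
\widehat\beta_{A\Lambda} \to_p \frac{\mathbb{E}[Y(Z - h_{A\Lambda}(X))]}{\mathbb{E}[T(Z - h_{A\Lambda}(X))]},
\]
and then to expand the numerator using Assumptions \ref{as: conditional independence} and \ref{as: monotonicity}, exactly mirroring the proof of Theorem \ref{thm: probability limits} in the special case $s = 1$. The denominator is already in the form displayed in the theorem, so nothing is needed there beyond checking that it is nonzero under the regularity conditions.

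For the convergence I would work through the three steps of Algorithm \ref{alg: modified logit based} in order. Step 1 is an M-estimation problem with a smooth, concave-in-$\psi$ criterion whose population counterpart is uniquely maximized at $\overline\psi_0$; under the identification and dominance conditions collected in the Appendix, a uniform law of large numbers gives $\widehat\psi \to_p \overline\psi_0$. In Step 2 the generated regressor $\widehat C_i = \Phi(X_i^\top\widehat\psi)$ is handled via continuity of $\Phi$ together with $\widehat\psi \to_p \overline\psi_0$: the Step 2 criterion is jointly continuous in $(\theta,\kappa,\psi)$ and concave in $(\theta,\kappa)$ for each $\psi$, so a uniform law of large numbers over the enlarged parameter space, combined with uniqueness of the maximizer $(\overline\theta_0,\overline\kappa_0)$ of the population criterion in \eqref{eq: augmented logit}, yields $(\widehat\theta,\widehat\kappa)\to_p(\overline\theta_0,\overline\kappa_0)$. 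By the continuous mapping theorem the fitted values $\Lambda(X_i^\top\widehat\theta + \widehat C_i\widehat\kappa)$ then agree with $h_{A\Lambda}(X_i)$ up to $o_p(1)$ terms inside sample averages, so a final application of the law of large numbers gives $n^{-1}\sum_i Y_i(Z_i - \Lambda(X_i^\top\widehat\theta + \widehat C_i\widehat\kappa)) \to_p \mathbb{E}[Y(Z - h_{A\Lambda}(X))]$ and likewise for the $T$-weighted average, and Slutsky's theorem delivers the ratio above.

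The only model-specific step is the expansion of $\mathbb{E}[Y(Z - h_{A\Lambda}(X))]$. Write $Y = Y(0) + (Y(1)-Y(0))T$ and, using Assumption \ref{as: monotonicity} (which rules out defiers), $T = \mathds 1\{G = AT\} + \mathds 1\{G = CP\}Z$. This splits the numerator into three terms. In the $Y(0)$ term, conditioning on $(X,Z)$ and applying Assumption \ref{as: conditional independence} replaces $\mathbb{E}[Y(0)\mid X,Z]$ by $\mathbb{E}[Y(0)\mid X]$, and integrating out $Z$ produces $\mathbb{E}[\mathbb{E}[Y(0)|X](\mathbb{E}[Z|X] - h_{A\Lambda}(X))]$. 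In the always-taker term, the same conditioning and Assumption \ref{as: conditional independence} give $\mathbb{E}[(Y(1)-Y(0))\mathds 1\{G=AT\}\mid X,Z] = \Delta_{AT}(X)\omega_{AT}(X)$, which integrates to $\mathbb{E}[\Delta_{AT}(X)\omega_{AT}(X)(\mathbb{E}[Z|X] - h_{A\Lambda}(X))]$. In the complier term, $Z(Z - h_{A\Lambda}(X)) = Z(1 - h_{A\Lambda}(X))$ since $Z$ is binary, and $\mathbb{E}[(Y(1)-Y(0))\mathds 1\{G=CP\}\mid X,Z] = \Delta_{CP}(X)\omega_{CP}(X)$ by Assumption \ref{as: conditional independence}, so this term equals $\mathbb{E}[\Delta_{CP}(X)\omega_{CP}(X)\mathbb{E}[Z|X](1 - h_{A\Lambda}(X))]$. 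Dividing the sum of the three terms by $\mathbb{E}[T(Z - h_{A\Lambda}(X))]$ gives the stated formula.

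The main obstacle is the generated-regressor aspect of Step 2: because $\widehat C_i$ depends on the first-stage estimate $\widehat\psi$, consistency of $(\widehat\theta,\widehat\kappa)$ does not follow from a standard single-equation MLE argument and instead requires a joint uniform law of large numbers in $(\theta,\kappa,\psi)$ together with continuity of $\Phi$; this is routine but is where the Appendix regularity conditions do real work. By contrast, the algebra leading to $\beta_{A\Lambda}$ is immediate and uses only that $h_{A\Lambda}$ is a measurable function of $X$ — its particular augmented-logit form plays no role in that step.
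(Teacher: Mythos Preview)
Your proposal is correct and follows essentially the same route as the paper: reduce to the ratio $\mathbb{E}[Y(Z-h_{A\Lambda}(X))]/\mathbb{E}[T(Z-h_{A\Lambda}(X))]$ and then expand the numerator via the $s=1$ decomposition from Theorem~\ref{thm: probability limits} (your identity $T=\mathds 1\{G=AT\}+\mathds 1\{G=CP\}Z$ is exactly $T=T(0)+\Delta T\,Z$ under monotonicity). The only minor difference is in the Step~2 consistency argument: the paper exploits concavity of the logit criterion in $(\theta,\kappa)$, so that pointwise convergence of $\widehat Q$ to $Q_0$ (which follows directly from $\widehat\psi\to_p\overline\psi_0$ and the LLN) already suffices via Theorem~2.7 of Newey--McFadden, whereas you appeal to a uniform LLN over the enlarged space $(\theta,\kappa,\psi)$ --- both are valid, but the concavity route is lighter.
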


This theorem shows that the probability limit of the augmented logit-based
IV estimator has the same structure as those of the logit-based IV and 2SLS
estimators. In particular, it shows that imposing only a minimal set of
conditions used in the program evaluation literature is not sufficient to
give the augmented logit-based IV estimator a causal interpretation, which
is similar to conclusions in the previous section and in \cite{BBMT22} for
the logit-based IV and 2SLS estimators, respectively. To obtain a causal
interpretation, we need to impose additional conditions. The following
corollary provides the probability limit of the estimator $\widehat{\beta }%
_{A\Lambda }$ under the condition that either Assumption \ref{as: linear
first stage} or Assumption \ref{as: nonlinear first stage} is satisfied,
along with Assumptions \ref{as: conditional independence}, \ref{as:
monotonicity}, and \ref{as: linear means} used in the previous section.

\begin{corollary}
\label{cor: augmented estimator simplifications} Suppose that Assumptions %
\ref{as: conditional independence}, \ref{as: monotonicity} and \ref{as: linear means} are satisfied. In addition, suppose that either Assumption \ref%
{as: linear first stage} or Assumption \ref{as: nonlinear first stage} is
satisfied. Then under appropriate regularity conditions, $\widehat
\beta_{A\Lambda}\to_p \beta_{A\Lambda}$, where 
\begin{equation}\label{eq: augmented good}
\beta_{A\Lambda} = \mathbb{E}[\Delta_{CP}(X)w_{A\Lambda}(X)]
\end{equation}
and 
\begin{equation}\label{eq: augmented weights good}
w_{A\Lambda}(x) = \frac{\omega_{CP}(x)\mathbb{E}[Z|X=x](1 -
\Lambda(x^{\top}\overline \theta_0 + \Phi(x^{\top}\overline \psi_0)\overline
\kappa_0))}{\mathbb{E}[\omega_{CP}(X)\mathbb{E}[Z|X](1 -
\Lambda(X^{\top}\overline \theta_0 + \Phi(X^{\top}\overline \psi_0)\overline
\kappa_0))]}
\end{equation}
for all $x\in \mathcal{X}$.
\end{corollary}

This corollary demonstrates that the augmented logit-based IV estimator has
a causal interpretation in a wider set of cases than the logit-based IV
estimator itself. In particular, both have a causal interpretation if
Assumption \ref{as: linear first stage} is satisfied, along with other
conditions, but the former has a causal interpretation even if Assumption %
\ref{as: linear first stage} is not satisfied, as long as the correct link
function $\Phi$ is being used. Interestingly, however, the weights $%
w_{A\Lambda}(\cdot)$ appearing in this theorem are generally different from
the weights $w_{\Lambda}(\cdot)$ appearing in Corollary \ref{cor:
simplifications}, which means that even when both estimators have a causal
interpretation, they generally estimate different quantities.

In principle, Corollary \ref{cor: augmented estimator simplifications} could be extended by relaxing Assumptions \ref{as: linear means} and \ref{as: nonlinear first stage} in the same way that Corollary \ref{cor: main result} extends Corollary \ref{cor: simplifications}. However, implementing such an extension would substantially complicate Step 1 of Algorithm \ref{alg: modified logit based}. In particular, it would require estimating $\overline\psi_0$ in a model of the form
\begin{equation*}
s\mathbb{E}[T|X,Z=1] + (1-s)\mathbb{E}[T|X,Z=0] = \Phi(X^{\top}\overline
\psi_0)
\end{equation*}
for some {\em unknown} $s\in [0,1]$. Developing this extension would add considerable technical complexity without yielding additional conceptual insights, and we therefore leave it for future work. Alternatively, we could construct a variant of Algorithm \ref{alg: modified logit based} that uses the subsample with $Z=1$ in the first step rather than $Z=0$. The analysis for this variant proceeds analogously to Theorem \ref{thm: augmented estimator general} and Corollary \ref{cor: augmented estimator simplifications} upon replacing $T(0)$ by $T(1)$ in Assumptions \ref{as: linear means}, \ref{as: linear first stage}, and \ref{as: nonlinear first stage}. Since this modification is conceptually identical and does not change the main conclusions, we do not present it separately.

Moreover, we could also consider a version of the augmented logit-based IV estimator in Algorithm \ref{alg: modified logit based} that treats the data with $Z=0$ and $Z=1$ symmetrically. For such a version, on the first step of Algorithm \ref{alg: modified logit based}, we run the maximum likelihood estimator separately for the data with $Z=0$ and $Z=1$ to obtain $\widehat\psi_1$ and $\widehat\psi_2$. Then on the second and third steps, instead of including just one $\widehat C_i = \Phi(X_i^\top \widehat\psi)$, we include both $\widehat C_{i,1} = \Phi(X_i^\top\widehat\psi_1)$ and $\widehat C_{i,2} = \Phi(X_i^\top\widehat\psi_2)$, with the corresponding coefficients $\widehat\kappa_1$ and $\widehat\kappa_2$. Such an estimator has a causal interpretation if Assumptions \ref{as: linear means}, \ref{as: linear first stage}, and \ref{as: nonlinear first stage} hold either as stated, with $T(0)$, or as modified above, with $T(1)$ replacing $T(0)$.

To conclude this section, we note that as in the previous section,
Assumptions \ref{as: linear means}, \ref{as: linear first stage}, and \ref%
{as: nonlinear first stage} are not needed for a causal interpretation of
the augment logit-based IV estimator if the conditional mean function $%
x\mapsto \mathbb{E}[Z|X=x]$ takes the logit form, i.e. Assumption \ref{as:
logit correct specification} is satisfied. Indeed, we have the following
result.

\begin{corollary}
\label{cor: augmented estimator correct logit} Suppose that Assumptions \ref%
{as: conditional independence}, \ref{as: monotonicity} and \ref{as: logit
correct specification} are satisfied. Then under appropriate regularity
conditions, $\widehat \beta_{A\Lambda}\to_p \beta_{A\Lambda}$, with the
probability limit $\beta_{A\Lambda}$ taking the following form: 
\begin{equation*}
\beta_{A\Lambda} = \mathbb{E}[\Delta_{CP}(X)w_0(X)],
\end{equation*}
where 
\begin{equation}  \label{eq: true weights 2}
w_{0}(x) = \frac{\omega_{CP}(x)\mathbb{E}[Z|X=x](1 - \mathbb{E}[Z|X=x])}{%
\mathbb{E}[\omega_{CP}(X)\mathbb{E}[Z|X](1 - \mathbb{E}[Z|X])]}
\end{equation}
for all $x\in \mathcal{X}$.
\end{corollary}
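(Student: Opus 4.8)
The plan is to obtain Corollary \ref{cor: augmented estimator correct logit} as a short corollary of Theorem \ref{thm: augmented estimator general}. Since Assumptions \ref{as: conditional independence} and \ref{as: monotonicity} hold, that theorem already delivers $\widehat\beta_{A\Lambda}\to_p\beta_{A\Lambda}$ with $\beta_{A\Lambda}$ equal to the sum of a complier term, an always-taker term, and a non-causal term, each built from $h_{A\Lambda}(x)=\Lambda(x^\top\overline\theta_0+\Phi(x^\top\overline\psi_0)\overline\kappa_0)$. The whole proof then hinges on the single identity $h_{A\Lambda}(X)=\mathbb{E}[Z\mid X]$ with probability one. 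Once this is in hand, the always-taker term $\mathbb{E}[\Delta_{AT}(X)\omega_{AT}(X)(\mathbb{E}[Z|X]-h_{A\Lambda}(X))]$ and the non-causal term $\mathbb{E}[\mathbb{E}[Y(0)|X](\mathbb{E}[Z|X]-h_{A\Lambda}(X))]$ both vanish, and only the complier term is left, exactly as Corollary \ref{cor: true mean} is read off Theorem \ref{thm: probability limits}.

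To establish $h_{A\Lambda}(X)=\mathbb{E}[Z\mid X]$ a.s., I would observe that the population objective defining $(\overline\theta_0,\overline\kappa_0)$ in \eqref{eq: augmented logit} is, writing $W=(X^\top,\Phi(X^\top\overline\psi_0))^\top$ and $\delta=(\theta^\top,\kappa)^\top$, nothing but the Bernoulli log-likelihood $\mathbb{E}\big[Z\log\Lambda(W^\top\delta)+(1-Z)\log(1-\Lambda(W^\top\delta))\big]$, since $Zu-\log(1+e^u)=Z\log\Lambda(u)+(1-Z)\log(1-\Lambda(u))$. Because $\Lambda(W^\top\delta)$ is $\sigma(X)$-measurable, the law of iterated expectations rewrites this as $\mathbb{E}[g(\mathbb{E}[Z\mid X],\Lambda(W^\top\delta))]$ where $g(q,p)=q\log p+(1-q)\log(1-p)$, and for each fixed $q\in(0,1)$ the map $p\mapsto g(q,p)$ is strictly maximized at $p=q$ (the regularity conditions keep $\mathbb{E}[Z|X]$ inside $(0,1)$). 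Hence the objective is maximized over the augmented-logit class precisely when $\Lambda(W^\top\delta)=\mathbb{E}[Z\mid X]$ a.s., and under Assumption \ref{as: logit correct specification} this is attained at $\delta=(\theta_0^\top,0)^\top$, since then $\Lambda(W^\top\delta)=\Lambda(X^\top\theta_0)=\mathbb{E}[Z\mid X]$. Therefore every maximizer, in particular $(\overline\theta_0,\overline\kappa_0)$, satisfies $\Lambda(X^\top\overline\theta_0+\Phi(X^\top\overline\psi_0)\overline\kappa_0)=\mathbb{E}[Z\mid X]$ a.s., i.e. $h_{A\Lambda}(X)=\mathbb{E}[Z\mid X]$ a.s.; note this conclusion holds whether or not the augmented-logit parameters are themselves identified.

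It then remains to evaluate the complier term, which is the same computation as in the proof of Corollary \ref{cor: true mean}. Write $T=T(1)Z+T(0)(1-Z)$ and use $Z^2=Z$, $Z(1-Z)=0$ to get $T(Z-\mathbb{E}[Z|X])=T(1)Z(1-\mathbb{E}[Z|X])-T(0)(1-Z)\mathbb{E}[Z|X]$; taking conditional expectations given $X$ under Assumption \ref{as: conditional independence} yields $\mathbb{E}[T(Z-\mathbb{E}[Z|X])\mid X]=\mathbb{E}[Z|X](1-\mathbb{E}[Z|X])(\mathbb{E}[T(1)|X]-\mathbb{E}[T(0)|X])$, and Assumption \ref{as: monotonicity} identifies $\mathbb{E}[T(1)|X]-\mathbb{E}[T(0)|X]=\omega_{CP}(X)$. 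Thus the denominator $\mathbb{E}[T(Z-h_{A\Lambda}(X))]$ equals $\mathbb{E}[\omega_{CP}(X)\mathbb{E}[Z|X](1-\mathbb{E}[Z|X])]$, and substituting $h_{A\Lambda}(X)=\mathbb{E}[Z|X]$ into the complier term of Theorem \ref{thm: augmented estimator general} gives $\beta_{A\Lambda}=\mathbb{E}[\Delta_{CP}(X)w_0(X)]$ with $w_0$ as in \eqref{eq: true weights 2}.

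There is no genuinely hard step here: the only point with real content is the identity $h_{A\Lambda}(X)=\mathbb{E}[Z\mid X]$ a.s., which says that appending the extra regressor $\Phi(X^\top\overline\psi_0)$ does not distort the fitted conditional mean of $Z$ once the true model for $\mathbb{E}[Z|X]$ is of the logit form, and this is immediate because the logit objective is then a correctly specified likelihood that nests the truth. The role of the "appropriate regularity conditions" is only to supply the consistency statement borrowed from Theorem \ref{thm: augmented estimator general} and to keep $\mathbb{E}[Z|X]$ bounded away from $0$ and $1$, which legitimizes the log-likelihood manipulations above.
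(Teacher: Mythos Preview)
Your proposal is correct and takes essentially the same approach as the paper: both argue that under Assumption \ref{as: logit correct specification} the augmented logit maximizer yields $h_{A\Lambda}(X)=\mathbb{E}[Z\mid X]$ a.s., then read off the result from Theorem \ref{thm: augmented estimator general}. The only minor difference is that the paper asserts directly that $(\overline\theta_0,\overline\kappa_0)=(\theta_0,0)$ (relying on the non-singularity regularity condition for parameter identification), whereas you argue at the level of fitted values via the Bernoulli log-likelihood, which is a slightly more robust formulation; you also spell out the denominator computation that the paper leaves implicit by reference to Corollary \ref{cor: true mean}.
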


Together with Corollary \ref{cor: augmented estimator simplifications}, this
corollary shows that the augmented logit-based IV estimator has a triple
robustness property, meaning that it has a causal interpretation if at least
one of three conditions holds: either Assumptions \ref{as: linear means} and %
\ref{as: linear first stage} are satisfied, Assumptions \ref{as: linear
means} and \ref{as: nonlinear first stage} are satisfied, or Assumption \ref%
{as: logit correct specification} is satisfied. In the latter case, the
logit-based IV and the augmented logit-based IV estimators have the same
probability limits as the weights $w_0(\cdot)$ in \eqref{eq: true weights}
coincide with the weights $w_0(\cdot)$ in \eqref{eq: true weights 2}, i.e. $%
\widehat \beta_{\Lambda}$ and $\widehat \beta_{A\Lambda}$ consistently estimate the same
quantity.

\section{Asymptotic Distribution Theory}

\label{sec: asymptotic distribution}

In this section, we describe the asymptotic distribution of the logit-based
IV estimator $\widehat{\beta }_{\Lambda }$ and of the augmented logit-based
IV estimator $\widehat{\beta }_{A\Lambda }$.\ We do so without imposing
Assumptions \ref{as: linear means}, \ref{as: linear first stage}, \ref{as:
relaxed conditions}, and \ref{as: nonlinear first stage} and without
assuming that the conditional mean function $x\mapsto \mathbb{E}[Z|X=x]$
takes the logit form (Assumption \ref{as: logit correct specification}). We
thus allow for general misspecification, with the probability limits $\beta
_{\Lambda }$ and $\beta _{A\Lambda }$ of the estimators being given by
formulas in Theorems \ref{thm: probability limits} and \ref{thm: augmented
estimator general}, respectively. 

To describe the asymptotic distribution of $\widehat{\beta }_{\Lambda }$,
let 
\begin{equation*}
\varphi _{0}=(\mathbb{E}[\Lambda ^{\prime }(X^{\top }\theta _{0})XX^{\top
}])^{-1}\mathbb{E}[\Lambda ^{\prime }(X^{\top }\theta _{0})X(Y-T\beta
_{\Lambda })],
\end{equation*}%
which is a vector of coefficients in the weighted projection of $Y-T\beta
_{\Lambda }$ on $X$. Also, let 
\begin{equation*}
\ell _{i}^{\Lambda }=(Y_{i}-T_{i}\beta _{\Lambda }-X_{i}^{\top }\varphi
_{0})(Z_{i}-\Lambda (X_{i}^{\top }\theta _{0}))
\end{equation*}%
for all $i=1,\dots ,n$ and let $\ell ^{\Lambda }$ be defined analogously
with $(Y,T,X,Z)$ replacing $(Y_{i},T_{i},X_{i},Z_{i})$. The following
theorem derives the asymptotic distribution of $\widehat{\beta }_{\Lambda }$.

\begin{theorem}
\label{thm: asymptotic distribution logit based} Suppose that Assumptions %
\ref{as: conditional independence} and \ref{as: monotonicity} are satisfied.
Then under appropriate regularity conditions, 
\begin{equation*}
\sqrt{n}(\widehat{\beta }_{\Lambda }-\beta _{\Lambda })=\frac{%
n^{-1/2}\sum_{i=1}^{n}\ell _{i}^{\Lambda }}{\mathbb{E}[T(Z-\Lambda (X^{\top
}\theta _{0}))]}+o_{p}(1)%
\rightarrow _{d}N(0,\sigma _{\Lambda }^{2}),
\end{equation*}%
where $\sigma _{\Lambda }^{2}=\mathbb{E}[(\ell ^{\Lambda })^{2}]/(\mathbb{E}%
[T(Z-\Lambda (X^{\top }\theta _{0}))])^{2}.$
\end{theorem}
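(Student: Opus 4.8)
The plan is to derive the asymptotic linear representation for $\widehat{\beta}_{\Lambda}$ by combining three standard ingredients: (i) a stochastic expansion of the first-stage logit estimator $\widehat{\theta}$, (ii) a linearization of the numerator and denominator ratio around the population quantities, and (iii) an algebraic cancellation showing that the influence of $\widehat{\theta}-\theta_{0}$ on the numerator is precisely captured by the weighted-projection coefficient $\varphi_{0}$. First I would write $\widehat{\beta}_{\Lambda} = \widehat{N}/\widehat{D}$, with $\widehat{N} = n^{-1}\sum_{i}Y_{i}(Z_{i}-\Lambda(X_{i}^{\top}\widehat{\theta}))$ and $\widehat{D} = n^{-1}\sum_{i}T_{i}(Z_{i}-\Lambda(X_{i}^{\top}\widehat{\theta}))$. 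Since Theorem \ref{thm: probability limits} (with the implicit regularity conditions) already gives $\widehat{\beta}_{\Lambda}\to_{p}\beta_{\Lambda}$ and $\widehat{D}\to_{p}\mathbb{E}[T(Z-\Lambda(X^{\top}\theta_{0}))] =: D_{0}\neq 0$, it suffices to establish the expansion for $\sqrt{n}(\widehat{N}-\beta_{\Lambda}\widehat{D})$ and then divide by $\widehat{D}$, using the delta method / Slutsky argument to replace $\widehat{D}$ by $D_{0}$ in the denominator. Writing $\widehat{M} := \widehat{N}-\beta_{\Lambda}\widehat{D} = n^{-1}\sum_{i}(Y_{i}-T_{i}\beta_{\Lambda})(Z_{i}-\Lambda(X_{i}^{\top}\widehat{\theta}))$, the target reduces to showing $\sqrt{n}\,\widehat{M} = n^{-1/2}\sum_{i}\ell_{i}^{\Lambda} + o_{p}(1)$.

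Next I would Taylor-expand $\Lambda(X_{i}^{\top}\widehat{\theta})$ around $\theta_{0}$, giving
\[
\sqrt{n}\,\widehat{M} = \frac{1}{\sqrt{n}}\sum_{i}(Y_{i}-T_{i}\beta_{\Lambda})(Z_{i}-\Lambda(X_{i}^{\top}\theta_{0})) - \Big(\frac{1}{n}\sum_{i}(Y_{i}-T_{i}\beta_{\Lambda})\Lambda'(X_{i}^{\top}\theta_{0})X_{i}^{\top}\Big)\sqrt{n}(\widehat{\theta}-\theta_{0}) + o_{p}(1),
\]
where the remainder is controlled by a uniform bound on $\Lambda''$ together with $\sqrt{n}$-consistency of $\widehat{\theta}$ (which itself follows from standard M-estimation theory for the logit MLE under the stated regularity conditions). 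The sample average multiplying $\sqrt{n}(\widehat{\theta}-\theta_{0})$ converges in probability to $\mathbb{E}[(Y-T\beta_{\Lambda})\Lambda'(X^{\top}\theta_{0})X^{\top}]$. Then I would invoke the first-order condition / influence-function representation of the logit estimator: $\sqrt{n}(\widehat{\theta}-\theta_{0}) = (\mathbb{E}[\Lambda'(X^{\top}\theta_{0})XX^{\top}])^{-1}\,n^{-1/2}\sum_{i}X_{i}(Z_{i}-\Lambda(X_{i}^{\top}\theta_{0})) + o_{p}(1)$, using that $\theta_{0}$ defined in \eqref{eq: definition theta0} is the (possibly pseudo-true) maximizer so the score has mean zero and the Hessian is $-\mathbb{E}[\Lambda'(X^{\top}\theta_{0})XX^{\top}]$. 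Substituting this in and recognizing that $\mathbb{E}[(Y-T\beta_{\Lambda})\Lambda'(X^{\top}\theta_{0})X^{\top}](\mathbb{E}[\Lambda'(X^{\top}\theta_{0})XX^{\top}])^{-1} = \varphi_{0}^{\top}$ by the very definition of $\varphi_{0}$, the correction term becomes $-n^{-1/2}\sum_{i}\varphi_{0}^{\top}X_{i}(Z_{i}-\Lambda(X_{i}^{\top}\theta_{0})) = -n^{-1/2}\sum_{i}(X_{i}^{\top}\varphi_{0})(Z_{i}-\Lambda(X_{i}^{\top}\theta_{0}))$. Combining the leading term and this correction yields exactly $n^{-1/2}\sum_{i}(Y_{i}-T_{i}\beta_{\Lambda}-X_{i}^{\top}\varphi_{0})(Z_{i}-\Lambda(X_{i}^{\top}\theta_{0})) = n^{-1/2}\sum_{i}\ell_{i}^{\Lambda}$, which is the claimed representation.

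Finally, I would verify that $\ell^{\Lambda}$ has mean zero — this is automatic once the representation is established, since the left side is $o_{p}(1)$-close to a $\sqrt{n}$-scaled term that must be asymptotically centered, but it also follows directly: $\mathbb{E}[\ell^{\Lambda}] = \mathbb{E}[(Y-T\beta_{\Lambda})(Z-\Lambda(X^{\top}\theta_{0}))] - \mathbb{E}[(X^{\top}\varphi_{0})(Z-\Lambda(X^{\top}\theta_{0}))]$, where the second term vanishes by the definition of $\theta_{0}$ (the score condition $\mathbb{E}[X(Z-\Lambda(X^{\top}\theta_{0}))]=0$) and the first term equals $D_{0}\beta_{\Lambda}-\beta_{\Lambda}D_{0}=0$ — wait, more carefully, $\mathbb{E}[(Y-T\beta_{\Lambda})(Z-\Lambda(X^{\top}\theta_{0}))] = \mathbb{E}[M_{0}]$ which is zero by the definition $\beta_{\Lambda}=\mathbb{E}[Y(Z-h_{\Lambda}(X))]/\mathbb{E}[T(Z-h_{\Lambda}(X))]$. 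Then the Lindeberg–Lévy CLT (under a finite-variance moment condition on $\ell^{\Lambda}$, part of the regularity conditions) gives $n^{-1/2}\sum_{i}\ell_{i}^{\Lambda}\to_{d}N(0,\mathbb{E}[(\ell^{\Lambda})^{2}])$, and dividing by $\widehat{D}\to_{p}D_{0}$ via Slutsky delivers the stated $N(0,\sigma_{\Lambda}^{2})$ limit. The main obstacle is the bookkeeping in the Taylor expansion and the uniform control of the second-order remainder term, together with carefully citing the right regularity conditions (from the Appendix) that guarantee $\sqrt{n}$-consistency and asymptotic linearity of the logit MLE under possible misspecification; the algebraic cancellation producing $\varphi_{0}$, by contrast, is essentially immediate once the pieces are in place.
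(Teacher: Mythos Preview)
Your proposal is correct and follows essentially the same route as the paper's own proof: write $\sqrt{n}(\widehat{\beta}_{\Lambda}-\beta_{\Lambda})$ as $\sqrt{n}\,\widehat{M}/\widehat{D}$, Taylor-expand $\Lambda(X_i^{\top}\widehat{\theta})$ around $\theta_0$, plug in the standard influence-function expansion of the logit MLE, and observe that the resulting correction collapses to $-n^{-1/2}\sum_i (X_i^{\top}\varphi_0)(Z_i-\Lambda(X_i^{\top}\theta_0))$ by the definition of $\varphi_0$. The paper's proof is terser---it does not spell out the mean-zero check for $\ell^{\Lambda}$ or the second-order remainder bound---but the argument is otherwise identical.
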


To describe the asymptotic distribution of $\widehat{\beta }_{A\Lambda }$,
let $C=\Phi (X^{\top }\overline{\psi }_{0})$ and $C_{i}=\Phi (X_{i}^{\top }%
\overline{\psi }_{0})$
for all $i=1,\dots ,n$. Also, let $W=(X^{\top },C)^{\top }$ and $%
W_{i}=(X_{i}^{\top },C_{i})^{\top }$ for all $i=1,\dots ,n$. 
In addition, let $A_{0}=\mathbb{E}[\Lambda ^{\prime }(X^{\top }\overline{%
\theta }_{0}+C\overline{\kappa }_{0})WW^{\top }]$ and
\begin{align*}
B&=\mathbb{E}\left[\mathds{1}\{Z=0\}(1-T)\left( \frac{\Phi''(X^{\top}\overline{\psi}_0)}{1-\Phi(X^{\top}\overline\psi_0)} + \frac{\Phi'(X^\top\overline\psi_0)^2}{(1-\Phi(X^\top\overline\psi_0))^2} \right)XX^\top\right]\\ 
& \quad - \mathbb{E}\left[\mathds{1}\{Z=0\}T\left( \frac{\Phi''(X^{\top}\overline{\psi}_0)}{\Phi(X^{\top}\overline\psi_0)} - \frac{\Phi'(X^\top\overline\psi_0)^2}{\Phi(X^\top\overline\psi_0)^2} \right)XX^\top\right].
\end{align*}
Moreover, let
$$
S_i = \mathds{1}\{Z_i = 0\}\left(\frac{T_i\Phi'(X_i^\top\overline\psi_0)}{\Phi(X_i^\top\overline\psi_0)} - \frac{(1-T_i)\Phi'(X_i^\top\overline\psi_0)}{1 - \Phi(X_i^\top\overline\psi_0)}\right)X_i
$$
for all $i = 1,\dots,n$. Further, let $e$ be the vector in $\mathbb{R}^{p+1}$ such that its last component is
one and all other components are zero. Further, let 
\begin{equation*}
\xi _{0}=A_0^{-1}\Big(\mathbb{E}[\Lambda ^{\prime
}(X^{\top }\overline{\theta }_{0}+C\overline{\kappa }_{0})W(Y-T\beta
_{A\Lambda })]\Big),
\end{equation*}%
which is the vector of coefficients in the weighted projection of $Y-T\beta
_{A\Lambda }$ on $W$, 
\begin{equation*}
A_{1}=\mathbb{E}\Big[\{(Z-\Lambda (X^{\top }\overline{\theta }_{0}+C%
\overline{\kappa }_{0}))e-\overline{\kappa }_{0}\Lambda ^{\prime }(X^{\top }%
\overline{\theta }_{0}+C\overline{\kappa }_{0})W\}%
\Phi^{\prime }(X^{\top }\overline{\psi }_{0})%
X^{\top }\Big],
\end{equation*}%
\begin{equation*}
A_{2}=\overline{\kappa }_{0}\mathbb{E}\Big[\Lambda
^{\prime }(X^{\top }\overline{\theta }_{0}+C\overline{\kappa }_{0})%
\Phi^{\prime}(X^{\top }\overline{\psi }_{0})%
(Y-T\beta _{A\Lambda })X^{\top }\Big].
\end{equation*}%
Finally, let 
\begin{equation*}
\ell _{i,1}^{A\Lambda }=(Y_{i}-T_{i}\beta _{A\Lambda }-W_{i}^{\top }\xi
_{0})(Z_{i}-\Lambda (X_{i}^{\top }\overline{\theta }_{0}+C_{i}\overline{%
\kappa }_{0}))\quad\text{and}\quad \ell _{i,2}^{A\Lambda }=(\xi _{o}^{\top }A_{1}+A_{2})B^{-1}S_i
\end{equation*}%
for all $i=1,\dots ,n$ and let $\ell _{1}^{A\Lambda }$ and $\ell
_{2}^{A\Lambda }$ be defined analogously with $(Y,T,X,Z)$ replacing $%
(Y_{i},T_{i},X_{i},Z_{i})$. The following theorem derives the asymptotic
distribution of $\widehat{\beta }_{A\Lambda }$.

\begin{theorem}
\label{thm: asy dist augmented logit based estimator} Suppose that
Assumptions \ref{as: conditional independence} and \ref{as: monotonicity}
are satisfied. Then under appropriate regularity conditions, 
\begin{equation*}
\sqrt n(\widehat \beta_{A\Lambda} - \beta_{A\Lambda}) = \frac{%
n^{-1/2}\sum_{i=1}^n(\ell_{i,1}^{A\Lambda} - \ell_{i,2}^{A\Lambda})}{\mathbb{%
E}[T(Z - \Lambda(X^{\top}\overline \theta_0 + C\overline \kappa_0))]} +
o_p(1) \to_d N(0,\sigma_{A\Lambda}^2),
\end{equation*}
where $\sigma_{A\Lambda}^2 = \mathbb{E}[(\ell_{1}^{A\Lambda} -
\ell_{2}^{A\Lambda})^2]/(\mathbb{E}[T(Z - \Lambda(X^{\top}\overline \theta_0
+ C\overline \kappa_0))])^2. $
\end{theorem}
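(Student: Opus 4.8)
The plan is to mirror the proof of Theorem~\ref{thm: asymptotic distribution logit based}, but now tracking the estimation error contributed by all three steps of Algorithm~\ref{alg: modified logit based}. Write $\widehat{\lambda}=(\widehat{\theta}^{\top},\widehat{\kappa})^{\top}$, $\overline{\lambda}_{0}=(\overline{\theta}_{0}^{\top},\overline{\kappa}_{0})^{\top}$, and $\widehat{W}_{i}=(X_{i}^{\top},\widehat{C}_{i})^{\top}$ with $\widehat{C}_{i}=\Lambda(X_{i}^{\top}\widehat{\psi})$, so that $\widehat{W}_{i}^{\top}\widehat{\lambda}=X_{i}^{\top}\widehat{\theta}+\widehat{C}_{i}\widehat{\kappa}$ and $W_{i}^{\top}\overline{\lambda}_{0}=X_{i}^{\top}\overline{\theta}_{0}+C_{i}\overline{\kappa}_{0}$. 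First I would write
\begin{equation*}
\sqrt{n}(\widehat{\beta}_{A\Lambda}-\beta_{A\Lambda})=\frac{n^{-1/2}\sum_{i=1}^{n}(Y_{i}-T_{i}\beta_{A\Lambda})(Z_{i}-\Lambda(\widehat{W}_{i}^{\top}\widehat{\lambda}))}{n^{-1}\sum_{i=1}^{n}T_{i}(Z_{i}-\Lambda(\widehat{W}_{i}^{\top}\widehat{\lambda}))},
\end{equation*}
and argue that, by consistency of $\widehat{\psi}$ and $(\widehat{\theta},\widehat{\kappa})$, a law of large numbers, and continuity of $\Lambda$, the denominator converges in probability to $\mathbb{E}[T(Z-\Lambda(X^{\top}\overline{\theta}_{0}+C\overline{\kappa}_{0}))]$, which the regularity conditions keep bounded away from zero. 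By Slutsky's lemma it then suffices to obtain an asymptotically linear representation of the numerator.

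Second, I would Taylor-expand the numerator jointly in $(\lambda,\psi)$ around $(\overline{\lambda}_{0},\overline{\psi}_{0})$, using a uniform law of large numbers to replace sample averages of the derivatives by their population counterparts and to discard the second-order remainder (this is where smoothness of $\Lambda$ and the moment and $\sqrt{n}$-consistency conditions collected in the Appendix enter). The leading term is $n^{-1/2}\sum_{i}(Y_{i}-T_{i}\beta_{A\Lambda})(Z_{i}-\Lambda(W_{i}^{\top}\overline{\lambda}_{0}))$. The $\lambda$-derivative contributes $-\mathbb{E}[(Y-T\beta_{A\Lambda})\Lambda'(W^{\top}\overline{\lambda}_{0})W^{\top}]\sqrt{n}(\widehat{\lambda}-\overline{\lambda}_{0})$, which by the definition of $\xi_{0}$ equals $-\xi_{0}^{\top}\mathbb{E}[\Lambda'(W^{\top}\overline{\lambda}_{0})WW^{\top}]\sqrt{n}(\widehat{\lambda}-\overline{\lambda}_{0})$; the $\psi$-derivative, which arises purely through the generated regressor $\widehat{C}_{i}$, contributes $-\overline{\kappa}_{0}\mathbb{E}[(Y-T\beta_{A\Lambda})\Lambda'(W^{\top}\overline{\lambda}_{0})\Lambda'(X^{\top}\overline{\psi}_{0})X^{\top}]\sqrt{n}(\widehat{\psi}-\overline{\psi}_{0})$.

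Third, I would obtain asymptotically linear representations of the two preliminary estimators. Since $\Phi=\Lambda$, the first-step estimator solves the score equation $n^{-1}\sum_{i}\mathds1\{Z_{i}=0\}(T_{i}-\Lambda(X_{i}^{\top}\widehat{\psi}))X_{i}=0$, so a standard M-estimator expansion gives $\sqrt{n}(\widehat{\psi}-\overline{\psi}_{0})=(\mathbb{E}[\mathds1\{Z=0\}\Lambda'(X^{\top}\overline{\psi}_{0})XX^{\top}])^{-1}n^{-1/2}\sum_{i}\mathds1\{Z_{i}=0\}(T_{i}-\Lambda(X_{i}^{\top}\overline{\psi}_{0}))X_{i}+o_{p}(1)$. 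The second-step estimator solves the logit first-order condition $n^{-1}\sum_{i}\widehat{W}_{i}(Z_{i}-\Lambda(\widehat{W}_{i}^{\top}\widehat{\lambda}))=0$; differentiating this moment in $\lambda$ gives the Jacobian $-\mathbb{E}[\Lambda'(W^{\top}\overline{\lambda}_{0})WW^{\top}]$, while differentiating it in $\psi$ --- through both $\widehat{W}_{i}$ and the index $\widehat{W}_{i}^{\top}\widehat{\lambda}$, the former producing the basis vector $e$ --- gives exactly the matrix $A_{1}$ of the statement. Hence $\sqrt{n}(\widehat{\lambda}-\overline{\lambda}_{0})=(\mathbb{E}[\Lambda'(W^{\top}\overline{\lambda}_{0})WW^{\top}])^{-1}\big(n^{-1/2}\sum_{i}W_{i}(Z_{i}-\Lambda(W_{i}^{\top}\overline{\lambda}_{0}))+A_{1}\sqrt{n}(\widehat{\psi}-\overline{\psi}_{0})\big)+o_{p}(1)$.

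Finally, I would substitute these representations into the expansion of the numerator. The factor $\mathbb{E}[\Lambda'(W^{\top}\overline{\lambda}_{0})WW^{\top}]$ cancels, and the $\lambda$-channel contributes $-\xi_{0}^{\top}n^{-1/2}\sum_{i}W_{i}(Z_{i}-\Lambda(W_{i}^{\top}\overline{\lambda}_{0}))-\xi_{0}^{\top}A_{1}\sqrt{n}(\widehat{\psi}-\overline{\psi}_{0})$; combining the first piece with the leading term yields $n^{-1/2}\sum_{i}\ell_{i,1}^{A\Lambda}$, while the remaining $\widehat{\psi}$-terms combine, by the definition of $A_{2}$, into $-A_{2}\sqrt{n}(\widehat{\psi}-\overline{\psi}_{0})$, and plugging in the expansion of $\widehat{\psi}$ turns this into $-n^{-1/2}\sum_{i}\ell_{i,2}^{A\Lambda}$. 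Thus the numerator equals $n^{-1/2}\sum_{i}(\ell_{i,1}^{A\Lambda}-\ell_{i,2}^{A\Lambda})+o_{p}(1)$. The summands are i.i.d.\ with mean zero: $\mathbb{E}[\ell_{1}^{A\Lambda}]=0$ follows from $\mathbb{E}[(Y-T\beta_{A\Lambda})(Z-\Lambda(W^{\top}\overline{\lambda}_{0}))]=0$ (the definition of $\beta_{A\Lambda}$ from Theorem~\ref{thm: augmented estimator general}) together with the population first-order condition $\mathbb{E}[W(Z-\Lambda(W^{\top}\overline{\lambda}_{0}))]=0$ of the second-step logit, and $\mathbb{E}[\ell_{2}^{A\Lambda}]=0$ follows from the population first-order condition $\mathbb{E}[\mathds1\{Z=0\}(T-\Lambda(X^{\top}\overline{\psi}_{0}))X]=0$ of the first-step MLE. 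The central limit theorem for i.i.d.\ data, combined with another application of Slutsky's lemma, then delivers the claimed $N(0,\sigma_{A\Lambda}^{2})$ limit with $\sigma_{A\Lambda}^{2}$ as stated. The main obstacle is the generated-regressor structure: because $\widehat{C}_{i}$ depends on $\widehat{\psi}$ and feeds simultaneously into the final moment and into the second-step logit, the chain-rule bookkeeping must be carried out carefully, and it is precisely this bookkeeping that produces the correction matrices $A_{1}$ and $A_{2}$ and the second influence-function component $\ell_{i,2}^{A\Lambda}$; the remaining uniform expansions used to control the higher-order remainders are routine given the Appendix conditions.
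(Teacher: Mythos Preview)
Your proposal is correct and follows essentially the same approach as the paper: both expand $\widehat{\psi}$ via its logit score, expand $(\widehat{\theta},\widehat{\kappa})$ via its logit score while tracking the generated-regressor dependence on $\widehat{\psi}$ (which is where $A_{1}$ appears), and then substitute these into a Taylor expansion of the numerator so that the $\lambda$-channel recenters the leading term to $\ell_{i,1}^{A\Lambda}$ and the residual $\widehat{\psi}$-terms collapse via $A_{2}$ into $-\ell_{i,2}^{A\Lambda}$. Your write-up is in fact more explicit than the paper's proof, which compresses the final substitution into a single ``Thus'' step.
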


In this theorem, the terms $\ell_{i,1}^{A\Lambda}$ are analogous to the
terms $\ell_{i}^{\Lambda}$ in Theorem \ref{thm: asymptotic distribution
logit based} and the terms $\ell_{i,2}^{A\Lambda}$ capture the extra noise
appearing in Step 1 of Algorithm \ref{alg: modified logit based}.

The asymptotic variances $\sigma_{\Lambda}^2$ and $\sigma_{A\Lambda}^2$
appearing in these theorems can clearly be estimated by a plugin method, and
it is standard to provide conditions under which such estimators will be
consistent. We omit more detailed discussion for the sake of paper brevity.

\begin{remark}
{\normalfont
When $\Phi(\cdot)$ takes the logit form, i.e. $\Phi(\cdot) = \Lambda(\cdot)$, the expressions for the matrix $B$ and the vectors $S_i$ can be simplified using the identities $\Lambda'(\cdot) = \Lambda(\cdot)(1-\Lambda(\cdot))$ and $\Lambda''(\cdot) = \Lambda(\cdot)(1-\Lambda(\cdot))^2 - \Lambda(\cdot)^2(1-\Lambda(\cdot))$:
$$
B = \mathbb{E}\left[\mathds{1}\{Z=0\}\Lambda'(X^{\top}\overline{\psi}_0)XX^\top\right]
$$
and
$$
S_i = \mathds{1}\{Z_i = 0\}\left(T_i - \Lambda(X_i^\top\overline\psi_0)\right)X_i
$$
for all $i=1,\dots,n$.
}
\qed
\end{remark}

\section{Hausman Specification Test}

\label{sec: hausman test} 

In Sections \ref{sec: estimator} and \ref{sec:
augmented estimator}, we showed that one of the cases where the logit-based
IV and augmented logit-based IV estimators $\widehat \beta_{\Lambda}$ and $%
\widehat \beta_{A\Lambda}$ have a causal interpretation is the case where
the conditional mean function $x\mapsto \mathbb{E}[Z|X=x]$ takes the logit
form, i.e. Assumption \ref{as: logit correct specification} is satisfied. In
this section, we develop a Hausman test to check whether Assumption \ref{as:
logit correct specification} is indeed satisfied, following the original
work in \cite{H78}. Throughout this section, we will implicitly maintain
Assumptions \ref{as: conditional independence} and \ref{as: monotonicity}.

To describe the Hausman test, observe that under Assumption \ref{as: logit
correct specification}, it follows from Corollaries \ref{cor: true mean} and %
\ref{cor: augmented estimator correct logit} that $\beta_{\Lambda} =
\beta_{A\Lambda}$. Therefore, to test whether Assumption \ref{as: logit
correct specification} is satisfied, it makes sense to check whether the
estimators $\widehat \beta_{\Lambda}$ and $\widehat \beta_{A\Lambda}$ are
sufficiently close to each other. In turn, the asymptotic distribution of
the difference $\widehat \beta_{\Lambda} - \widehat \beta_{A\Lambda}$ can be
obtained from the asymptotic expansions in Theorems \ref{thm: asymptotic
distribution logit based} and \ref{thm: asy dist augmented logit based
estimator}. Indeed, under Assumption \ref{as: logit correct specification},
we have $\overline \theta_0 = \theta_0$ and $\overline \kappa_0 = 0$, and
so, by Theorems \ref{thm: asymptotic distribution logit based} and \ref{thm:
asy dist augmented logit based estimator}, 
\begin{equation*}
\sqrt n(\widehat \beta_{\Lambda} - \widehat \beta_{A\Lambda}) = \frac{%
n^{-1/2}\sum_{i=1}^n(\ell_i^{\Lambda} - \ell_{i,1}^{A\Lambda} +
\ell_{i,2}^{A\Lambda})}{\mathbb{E}[T(Z - \Lambda(X^{\top}\theta_0))]} +
o_p(1) \to N(0,\sigma_{H}^2),
\end{equation*}
where 
\begin{equation*}
\sigma_{H}^2 = \frac{\mathbb{E}[(\ell^{\Lambda} - \ell_{1}^{A\Lambda} +
\ell_{2}^{A\Lambda})^2]}{(\mathbb{E}[T(Z - \Lambda(X^{\top}\theta_0))])^2}.
\end{equation*}
The Hausman test therefore rejects the null hypothesis that Assumption \ref%
{as: logit correct specification} is satisfied if the test statistic 
\begin{equation}  \label{eq: hausman test statistic}
\widehat H = \frac{|\sqrt n(\widehat \beta_{\Lambda} - \widehat \beta_{A\Lambda})|}{%
\widehat \sigma_{H}}
\end{equation}
exceeds the critical value $z_{1- \alpha/2}$, where $\widehat \sigma_{H}$ is
the plugin estimator of $\sigma_{H}$, $\alpha$ is the nominal level of the
test and $z_{1- \alpha/2}$ is the number such that a standard normal random
variable exceeds this number with probability $\alpha/2$.

Being parametric, the test we have just described may not have power against
some alternatives. However, as long as Assumption \ref{as: logit correct
specification} is not satisfied, we will generically have $(\overline
\theta_0,\overline \kappa_0)\neq (\theta_0,0)$, in which case the function $%
h_{A\Lambda}(\cdot)$ is different from the function $h_{\Lambda}(\cdot)$
and, as follows from Theorems \ref{thm: probability limits} and \ref{thm:
asy dist augmented logit based estimator}, $\beta_{\Lambda}$ is different
from $\beta_{A\Lambda}$. Thus, the Hausman test will have power against most
alternatives.

In addition, we can consider a split-sample version of the Hausman test. To
describe it, randomly split the whole sample $\mathcal{I }= \{1,\dots,n\}$
into two subsamples $\mathcal{I}_1$ and $\mathcal{I}_2$ of comparable size and calculate
the logit-based IV estimator using the subsample $\mathcal{I}_1$ and the
augmented logit-based IV estimator using the subsample $\mathcal{I}_2$. Call
these estimators $\widehat \beta_{\Lambda,1}$ and $\widehat
\beta_{A\Lambda,2}$, respectively. Then, under Assumption \ref{as: logit
correct specification}, these two estimators have the same probability
limits and, by Theorems \ref{thm: asymptotic distribution logit based} and %
\ref{thm: asy dist augmented logit based estimator}, 
\begin{equation*}
\sqrt n(\widehat \beta_{\Lambda,1} - \widehat \beta_{A\Lambda,2}) = \frac{%
\sqrt{n}(|\mathcal I_1|^{-1}|\sum_{i\in \mathcal{I}_1}\ell_i^{\Lambda} - |\mathcal I_2|^{-1}|\sum_{i\in \mathcal{I}%
_2}(\ell_{i,1}^{A\Lambda} - \ell_{i,2}^{A\Lambda}))}{\mathbb{E}[T(Z -
\Lambda(X^{\top}\theta_0))]}\to_d N(0,\sigma_{H,2}^2),
\end{equation*}
where 
\begin{equation*}
\sigma_{H,2}^2 = \frac{\alpha_0 \mathbb{E}[(\ell^{\Lambda})^2] + (1-\alpha_0)\mathbb{E}%
[(\ell_{1}^{A\Lambda} - \ell_2^{A\Lambda})^2]}{(\mathbb{E}[T(Z -
\Lambda(X^{\top}\theta_0))])^2}
\end{equation*}
and $\alpha_0$ is the limit of the ratio $n/|\mathcal I_1|$. The split-sample Hausman test therefore rejects the null hypothesis that Assumption \ref{as: logit correct specification} is satisfied if the test
statistic 
\begin{equation}\label{eq: split-sample hausman test statistic}
\widehat H_S = \frac{|\sqrt n(\widehat \beta_{\Lambda,1} - \widehat \beta_{A\Lambda,2})|}{%
\widehat \sigma_{H,2}}
\end{equation}
exceeds the critical value $z_{1- \alpha/2}$, where $\widehat \sigma_{H,2}$
is the plugin estimator of $\sigma_{H,2}$ and the rest is the same as before.

Being split-sample, this version of the Hausman test may be somewhat less
powerful than the one described above. However, it is more robust in terms
of size control if it incidentally happens that $\sigma_H^2$ is close to
zero, in which case the distribution of the test statistic in 
\eqref{eq:
hausman test statistic} may not be well approximated by the standard normal
distribution.

\begin{remark}
\normalfont{The main advantage of the Hausman tests we described in this
section is their simplicity. We note, however, that there exist numerous
nonparametric tests in the literature that are more complicated to implement
but might have better power against some alternatives, e.g. see \cite{B82},
\cite{HM93}, \cite{HS01} for classical tests and \cite{J22} for recent
developments. On the other hand, it does not seem to be the case that the
power of any of these tests uniformly dominates that of the Hausman tests.} 
\qed
\end{remark}

\begin{remark}\label{rem: testing}
{\normalfont
To conclude this section, we emphasize that although the logit-based IV and augmented logit-based IV estimators improve upon the 2SLS estimator in the binary treatment and binary instrument setting as explained in Remarks \ref{eq: comparison between logit and LS 1} and \ref{eq: comparison between logit and LS 2} above, they remain parametric and thus require parametric assumptions for a causal interpretation—assumptions that can be tested on the data. Therefore, one sensible way to proceed in practice is as follows. First, run the Hausman test as described in this section to check if the conditional mean of $Z$ given $X$ has the logit form. If the test does not reject, one can conclude that both logit-based IV and augmented logit-based IV estimators have a causal interpretation and consistently estimate the weighted-average treatment effect for compliers (call it W-LATE) in \eqref{eq: ideal form} with weights given by \eqref{eq: true weights}. Otherwise, run the RESET test as described in \cite{R69} to check linearity of the conditional mean of $Y$ given $X$ and linearity of the conditional mean of $T$ given $X$ using the data with $Z=0$. If neither linearity condition is rejected, one can conclude that the logit-based IV estimator has a causal interpretation and consistently estimates the W-LATE in \eqref{eq: non ideal but good form} with weights given by \eqref{eq: general formula for weights relaxed} with $s=1$. Otherwise, run the RESET test as described in \cite{R69} to check linearity of the conditional mean of $Y$ given $X$ and linearity of the conditional mean of $T$ given $X$ using the data with $Z=1$. If neither linearity condition is rejected, one can conclude that the logit-based IV estimator has a causal interpretation and consistently estimates the W-LATE in \eqref{eq: non ideal but good form} with weights given by \eqref{eq: general formula for weights relaxed} with $s=0$. Otherwise, run the RESET test as described in \cite{P80} to check if the conditional mean of $T$ given $X$ takes the $\Phi$ form using the data with $Z=0$.\footnote{Both Ramsey and Pregibon versions of the RESET tests can be implemented in Stata.} If the test does not reject and, in addition, the linearity of $x\mapsto \mathbb E[Y|X=x,Z=0]$ was not rejected above, the augmented logit-based IV estimator has a causal interpretation and consistently estimates the W-LATE in \eqref{eq: augmented good} with weights given by \eqref{eq: augmented weights good}. Otherwise, run the RESET test as described in \cite{P80} to check if the conditional mean of $T$ given $X$ takes the $\Phi$ form using the data with $Z=1$. If the test does not reject and, in addition, the linearity of $x\mapsto \mathbb E[Y|X=x,Z=1]$ was not rejected above, the augmented logit-based IV estimator that uses the data with $Z=1$ on the first step (instead of the data with $Z=0$) has a causal interpretation and consistently estimates the W-LATE in \eqref{eq: augmented good} with weights given by \eqref{eq: augmented weights good}, with $\mathbb{E}[Z|X]$ and $\Lambda(X^{\top}\overline \theta_0 + \Phi(X^{\top}\overline \psi_0)\overline \kappa_0))$ interchanged. Otherwise, one should conclude that neither logit-based IV nor augmented logit-based IV estimator has a causal interpretation and use nonparametric estimators instead, e.g. those based on the double/debiased machine learning approach as constructed in \cite{BCFH17} and \cite{CCDDHNR18}. More broadly, we agree with conclusions in Section 7 of \cite{BBMT22} that in practice it does make sense to complement the parametric estimators, like our logit-based IV estimator, with nonparametric ones.\footnote{Our proposal involves multiple testing, and in principle the critical values of each test should be adjusted to account for multiplicity. However, we leave this issue for future research since it is outside of the main scope of the paper.}
}
\qed
\end{remark}

\section{Empirical Applications}

\label{sec: empirical applications}

In this section, we compare our logit-based IV estimators with the 2SLS estimator in three empirical applications: \citet{AE98}, \citet{ABBKK02}, and \citet{DH20}. All three applications involve a binary treatment and a binary instrument, making them well-suited for our logit-based IV estimators. %We show that the 2SLS estimator may lack a causal interpretation due to negative weights, in which case the logit-based IV estimators are preferred.

\citet{AE98} study the effect of childbearing on women's labor supply. We focus on their 2SLS specification in Table 7, Column 2, based on a sample of 394,840 women aged 21--35 with two or more children in the 1980 Census. The outcome $Y$ represents several labor market measures: an indicator for having worked for pay in the previous year, weeks worked in the previous year, average hours worked per week, labor income, or the log of family income. The treatment $T$ is a binary indicator for having more than two children. The instrument $Z$ is a binary indicator for whether the first two children are of the same sex. The vector of controls $X$ includes age, age at first birth, plus indicators for the first child being a boy, the second child being a boy, and the mother being Black, Hispanic, or of another race.

\citet{ABBKK02} analyze the impact of the school voucher program PACES in Colombia. We consider their 2SLS specification in Table 7, Column 3, which estimates the effect of private-school scholarship use on academic and marital outcomes, using voucher status as an instrument. The sample consists of 1,147 PACES applicants from the 1995 cohort in Bogot\'{a}, with an average age of 12.6. The outcome $Y$ represents various measures: highest grade completed, grade repetitions since the lottery, or indicators for currently being in school, having finished 8th grade, or being married or living with a companion. The treatment $T$ is a binary indicator for ever having used a private-school scholarship. The instrument $Z$ is a binary indicator for winning a PACES voucher. The vector of controls $X$ includes city, year of application, phone access, age, gender, residence strata, month of interview, and type of survey.

\citet{DH20} examine the effect of queen rule on war. We focus on their 2SLS specification in Table 3, Column 3, estimated using an unbalanced panel of 3,586 observations covering 193 reigns in 18 European polities from 1480 to 1913. The outcome $Y$ indicates whether a polity is at war in a given year. The treatment $T$ is a binary indicator for whether a queen is in power. The instrument $Z$ is a binary indicator for whether the previous monarchs had a male firstborn child. The vector of controls $X$ includes indicators for whether the previous monarchs were unrelated corulers, whether they had any legitimate children (with and without missing birth years), whether the gender of the firstborn is missing, as well as polity and decade dummies.

The results for \citet{AE98}, \citet{ABBKK02}, and \citet{DH20} are presented in Tables \ref{tab:AE98}, \ref{tab:ABBKK02}, and \ref{tab:DH20}, respectively. For each table, we replicate the original 2SLS estimation results in Column 1.\footnote{\citet{DH20} use clustered standard errors but report only the $p$-values computed via the wild bootstrap. To facilitate comparison with our logit-based IV estimators, we report the plug-in clustered standard error for their 2SLS estimator.} We present logit-based IV and augmented logit-based IV estimation results in Columns 2 and 3, respectively. As described in Algorithm \ref{alg: logit based} above, in order to calculate the logit-based IV estimator, we first estimate a logit regression of the instrument on the controls to obtain the residual, defined as the instrument minus its predicted probability. We then estimate an IV regression of the outcome on the treatment, using this residual as the instrument. As described in Algorithm \ref{alg: modified logit based}, In order to calculate the augmented logit-based IV estimator, we first estimate a logit regression of the treatment on the controls---using only the subset of observations where the instrument equals zero---and obtain the predicted probability. Next, we estimate another logit regression of the instrument on the controls and the predicted probability from the first step and obtain the residual as before. Finally, we estimate an IV regression of the outcome on the treatment using the residual from the second step as the instrument. The standard errors of these estimators are computed using plug-in methods based on the asymptotic variance formulas derived in Theorems \ref{thm: asymptotic distribution logit based} and \ref{thm: asy dist augmented logit based estimator}. For \citet{DH20}, we calculate clustered standard errors using the clustering structure in the original study. The p-values for the Hausman test and for the split-sample Hausman test are given in Columns 4 and 5. In addition, each table contains information on the fraction of observations $i$ with $X_i^\top\widehat\gamma > 1$ and with $X_i^\top\widehat\gamma < 0$.

\begin{table}[t]
	\centering
	\begin{threeparttable}
	\caption{\label{tab:AE98}Comparison of Estimators: \citet{AE98}}
	\begin{tabular}{l*{5}{c}} 
		\toprule                     
		&(1)&(2)&(3)&(4)&(5) \\                     
		Dependent Variable&$\widehat{\beta}_{2SLS}$&$\widehat{\beta}_{\Lambda}$&$\widehat{\beta}_{A\Lambda}$&$\widehat{H}$& $\widehat{H}_S$ \\ 
		\midrule 
		Worked for pay        &  -0.117   &  -0.117   &  -0.117  & 0.757 & 0.250 \\                     
                              &  (0.025)  &  (0.025)  &  (0.025) &       &       \\ 
        Weeks worked          &  -5.559   &  -5.559   &  -5.531  & 0.371 & 0.273 \\                     
                              &  (1.118)  &  (1.118)  &  (1.119) &       &       \\
        Hours/week            &  -4.547   &  -4.547   &  -4.534  & 0.902 & 0.209 \\                     
                              &  (0.954)  &  (0.960)  &  (0.955) &       &       \\ 
        Labor income          &  -1903.0  &  -1902.9  &  -1902.5 & 0.979 & 0.484 \\                   
                              &  (546.4)  &  (546.7)  &  (546.7) &       &       \\ 
        $\log$(Family income) &  -0.025   &  -0.025   &  -0.024  & 0.971 & 0.641 \\                
                              &  (0.068)  &  (0.082)  &  (0.070) &       &       \\ 
        \midrule
        Fraction $X^{\top}\widehat{\gamma}>1$ & 0 &       &          &       &       \\
        Fraction $X^{\top}\widehat{\gamma}<0$ & 0 &       &          &       &       \\
        Observations          & 394,840   & 394,840   & 394,840  &       &       \\
		\bottomrule 
	\end{tabular}
	\begin{tablenotes}[flushleft]\footnotesize 	
		\item Notes: Column 1 replicates the 2SLS results in \citet{AE98}. Columns 2 and 3 employ the original and augmented logit-based IV estimators described in Algorithms \ref{alg: logit based} and \ref{alg: modified logit based}, respectively. Standard errors are in parentheses. Columns 4 and 5 report the $p$-values of the full-sample and split-sample Hausman statistics in equations \eqref{eq: hausman test statistic} and \eqref{eq: split-sample hausman test statistic}, respectively. Fractions $X^{\top}\widehat{\gamma}>1$ and $X^{\top}\widehat{\gamma}<0$ represent the proportions of observations where the 2SLS first-stage predicted values $X^{\top}\widehat{\gamma}$ exceed one and fall below zero, respectively.
	\end{tablenotes}
    \end{threeparttable}
\end{table}

\begin{table}[t]
	\centering
	\begin{threeparttable}
	\caption{\label{tab:ABBKK02}Comparison of Estimators: \citet{ABBKK02}}
	\begin{tabular}{l*{5}{c}} 
	    \toprule                     
		&(1)&(2)&(3)&(4)&(5) \\                     
		Dependent Variable&$\widehat{\beta}_{2SLS}$&$\widehat{\beta}_{\Lambda}$&$\widehat{\beta}_{A\Lambda}$ &$\widehat{H}$& $\widehat{H}_S$\\ 
		\midrule
		Highest grade completed &   0.196   &   0.197   &   0.198  & 0.929 & 0.617 \\ 
		                        &  (0.077)  &  (0.077)  &  (0.077) &       &       \\
		In school               &   0.010   &   0.010   &   0.010  & 0.928 & 0.557 \\   
		                        &  (0.030)  &  (0.030)  &  (0.030) &       &       \\ 
		Total repetitions since &  -0.100   &  -0.102   &  -0.102  & 0.929 & 0.682 \\ 
		\quad  lottery          &  (0.042)  &  (0.041)  &  (0.041) &       &       \\ 
		Finished 8th grade      &   0.151   &   0.152   &   0.152  & 0.930 & 0.546 \\ 
		                        &  (0.040)  &  (0.040)  &  (0.041) &       &       \\
		Married or living with  &  -0.013   &  -0.013   &  -0.013  & 0.933 & 0.308 \\   
		\quad companion         &  (0.009)  &  (0.009)  &  (0.009) &       &       \\
		\midrule
		Fraction $X^{\top}\widehat{\gamma}>1$&  0   &       &          &       &       \\
		Fraction $X^{\top}\widehat{\gamma}<0$&  0   &       &          &       &       \\
		Observations            &   1,147   &   1,147   &   1,147  &       &       \\
		\bottomrule 
	\end{tabular}
	\begin{tablenotes}[flushleft]\footnotesize 	
		\item Notes: Column 1 replicates the 2SLS results in \citet{ABBKK02}. Columns 2 and 3 employ the original and augmented logit-based IV estimators described in Algorithms \ref{alg: logit based} and \ref{alg: modified logit based}, respectively. Standard errors are in parentheses. Columns 4 and 5 report the $p$-values of the full-sample and split-sample Hausman statistics in equations \eqref{eq: hausman test statistic} and \eqref{eq: split-sample hausman test statistic}, respectively. Fractions $X^{\top}\widehat{\gamma}>1$ and $X^{\top}\widehat{\gamma}<0$ represent the proportions of observations where the 2SLS first-stage predicted values $X^{\top}\widehat{\gamma}$ exceed one and fall below zero, respectively.
	\end{tablenotes}
	\end{threeparttable}
\end{table}

\begin{table}[t]
	\centering
	\begin{threeparttable}
	\caption{\label{tab:DH20}Comparison of Estimators: \citet{DH20}}
	\begin{tabular}{l*{5}{c}} 
		\toprule                     
		&(1)&(2)&(3)&(4)&(5) \\                   
		Dependent Variable&$\widehat{\beta}_{2SLS}$&$\widehat{\beta}_{\Lambda}$&$\widehat{\beta}_{A\Lambda}$ &$\widehat{H}$& $\widehat{H}_S$ \\ 
		\midrule 
		In war        &   0.400   &  -0.060   &  -0.064   &  0.816  &  0.417 \\   
		              &  (0.208)  &  (0.089)  &  (0.090)  &         &        \\
		\midrule
		Fraction $X^{\top}\hat{\gamma}>1$ & 6.83\%  &&    &         &    \\
		Fraction $X^{\top}\hat{\gamma}<0$ & 10.15\% &&    &         &    \\
		Observations  &   3,586   &   3,586   &   3,586   &         &    \\ 
		\bottomrule 
	\end{tabular}
	\begin{tablenotes}[flushleft]\footnotesize 	
		\item Notes: Column 1 replicates the 2SLS results in \citet{DH20}. Columns 2 and 3 employ the original and augmented logit-based IV estimators described in Algorithms \ref{alg: logit based} and \ref{alg: modified logit based}, respectively. Clustered standard errors are reported in parentheses and are computed using the clustering structure in the original study. Columns 4 and 5 report the $p$-values of the full-sample and split-sample Hausman statistics in equations \eqref{eq: hausman test statistic} and \eqref{eq: split-sample hausman test statistic}, respectively. Fractions $X^{\top}\widehat{\gamma}>1$ and $X^{\top}\widehat{\gamma}<0$ represent the proportions of observations where the 2SLS first-stage predicted values $X^{\top}\widehat{\gamma}$ exceed one and fall below zero, respectively.
	\end{tablenotes}
	\end{threeparttable}
\end{table}

In both \citet{AE98} and \citet{ABBKK02}, the logit-based IV estimators yield estimates and standard errors that are nearly identical to those of the 2SLS estimator, which is consistent with the fact that there are no observations $i$ in these studies with $X_i^\top\widehat\gamma > 1$ or with $X_i^\top\widehat\gamma < 0$. Moreover, the p-values  for both the Hausman test and the split-sample Hausman test are well above 10\%. This evidence suggests that both the 2SLS estimator and our logit-based IV estimators are appropriate for these studies and all three have a causal interpretation.

However, in \citet{DH20}, the logit-based IV estimators yield results that differ substantially from those of the 2SLS estimator. While the 2SLS estimate is positive, the logit-based IV estimates are negative and insignificant. In this study, $6.83\%$ of the observations have $X_i^\top\widehat\gamma > 1$ and $10.15\%$ have $X_i^\top\widehat\gamma < 0$, indicating that the 2SLS estimator may be assigning negative weights to a non-trivial fraction of compliers. In this case, the 2SLS estimator lacks a causal interpretation, making the logit-based IV estimators the preferred approach. Moreover, the p-values for the Hausman tests are also well above 10\%, reinforcing the conclusion that our logit-based IV estimators are appropriate for this study.

%In addition, across all three applications, the original and augmented logit-based IV estimators yield nearly identical results. While both are viable, the original estimator is simpler to implement. 

\appendix

\section{Regularity Conditions}

\label{app: regularity conditions}

In this section, we provide regularity conditions for all the theorems and
corollaries in the paper, which were omitted in the main text.

Theorem \ref{thm: probability limits} and Corollaries \ref{cor:
simplifications}, \ref{cor: less simplifications}, \ref{cor: main result},
and \ref{cor: true mean} use the following regularity conditions:

\begin{assumption}
\label{as: regularity conditions} We have (i) $\mathbb{E}[Y^{2}]<\infty $,
(ii) $\mathbb{E}[\Vert X\Vert ^{2}]<\infty $ and 
(iii) $\mathbb{E}[\Lambda ^{\prime }(X^{\top }\theta _{0})XX^{\top }]$ is
non-singular. 
In addition, (iv) $\mathbb{E}[T(Z-\Lambda (X^{\top }\theta _{0}))]\neq 0$
when we analyze the logit-based IV estimator; and $\mathbb{E}[T(Z-X^{\top
}\gamma _{0})]\neq 0$ when we analyze the 2SLS estimator.
\end{assumption}

Theorems \ref{thm: augmented estimator general} and \ref{thm: asymptotic
distribution logit based} and Corollaries \ref{cor: augmented estimator
simplifications} and \ref{cor: augmented estimator correct logit}, uses the
following regularity conditions:

\begin{assumption}
\label{as: regularity conditions 2} We have (i) $\mathbb{E}[Y^{2}]<\infty $,
(ii) $\mathbb{E}[\Vert X\Vert ^{2}]<\infty $, (iii) $\mathbb{P}(Z=0)>0$,
(iv) $\Phi (\cdot )$ is Lipschitz-continuous, (v) $\mathbb{E}[\Lambda
^{\prime }(X^{\top }\overline{\psi }_{0})XX^{\top }|Z=0]$ is non-singular,
(vi) $\mathbb{E}[\Lambda ^{\prime }(X^{\top }\overline{\theta }_{0}+C%
\overline{\kappa }_{0})WW^{\top }]$ is non-singular for $W=(X^{\top
},C)^{\top }$ where $C=\Phi (X^{\top }\overline{\psi }_{0})$, and (vii) $\mathbb{E}[T(Z-\Lambda (X^{\top }\overline{\theta }%
_{0}+C\overline{\kappa }_{0}))]\neq 0$.
\end{assumption}

Theorem \ref{thm: asy dist augmented logit based estimator} uses the
following regularity conditions:

\begin{assumption}
\label{as: regularity conditions 3} We have (i) $\mathbb{E}[Y^{2}]<\infty $,
(ii) $\mathbb{E}[\Vert X\Vert ^{4}]<\infty $, (iii) $|\log\Phi(t)| + |\log(1-\Phi(t))| \leq c_1 + c_2t^2$ for all $t\in\mathbb R$ and some constants $c_1,c_2>0$, (iv) $|\Phi'(t)/\Phi(t)| + |\Phi'(t)/(1-\Phi(t))| \leq c_3 t$ for all $t\in\mathbb R$ and some constant $c_3>0$, (v) $|\Phi''(t)/\Phi(t)| + |\Phi''(t)/(1-\Phi(t))| \leq c_4 t$ for all $t\in\mathbb R$ and some constant $c_4>0$, (vi) $t\mapsto\Phi(t)$ is twice continuously differentiable on $\mathbb R$, (vii) $t\mapsto \log\Phi(t)$ and $t\mapsto \log(1-\Phi(t))$ are both concave on $\mathbb R$,
(viii) $A_0$ and $B$ are non-singular, and (ix) $\mathbb{E}[T(Z-\Lambda (X^{\top }%
\overline{\theta }_{0}+C\overline{\kappa }_{0}))]\neq 0$.
\end{assumption}

Note that Assumptions \ref{as: regularity conditions 3}(iii)-\ref{as: regularity conditions 3}(vii) are satisfied when $\Phi(\cdot)$ takes the logit or the probit form. 

\section{Proofs for Section \protect \ref{sec: estimator}}

\label{app: proofs section 2}

\begin{proof}[Proof of Theorem \protect \ref{thm: probability limits}]
Observe that under Assumption \ref{as: regularity conditions}, we have $%
\widehat \theta \to_p\theta_0$ and $\widehat \gamma \to_p \gamma_0$, for
example, by Theorem 2.7 in \cite{NM94}. Thus, by standard arguments, again
under Assumption \ref{as: regularity conditions}, 
\begin{equation}  \label{eq: probability limit general logit}
\frac{\sum_{i=1}^n Y_i(Z_i - \Lambda(X_i^{\top}\widehat \theta))}{%
\sum_{i=1}^n T_i(Z_i - \Lambda(X_i^{\top}\widehat \theta))} \to_p \frac{%
\mathbb{E}[Y(Z - \Lambda(X^{\top}\theta_0))]}{E[T(Z -
\Lambda(X^{\top}\theta_0))]}
\end{equation}
and 
\begin{equation}  \label{eq: probability limit general 2sls}
\frac{\sum_{i=1}^n Y_i(Z_i - X_i^{\top}\widehat \gamma)}{\sum_{i=1}^n
T_i(Z_i - X_i^{\top}\widehat \gamma)} \to_p \frac{\mathbb{E}[Y(Z -
X^{\top}\gamma_0)]}{\mathbb{E}[T(Z - X^{\top}\gamma_0)]}.
\end{equation}
We thus only need to show that the probability limits appearing here
coincide with those in the statement of Theorem \ref{thm: probability limits}%
, which is what we do below.

We start with the numerators. Fix $a\in \{ \Lambda,2SLS\}$ and $s\in[0,1]$. Let $\Delta Y = Y(1) - Y(0)$
and $\Delta T = T(1) - T(0)$. Then $Y$ can be decomposed as 
\begin{equation}  \label{eq: general limit proof 1}
Y = \Delta Y T + Y(0) = \Delta Y(\Delta T Z + T(0)) + Y(0) = \Delta Y \Delta
T Z + \Delta Y T(0) + Y(0).
\end{equation}
Here, 
\begin{align}
\mathbb{E}[\Delta Y \Delta T Z(Z-h_a(X))] & = \mathbb{E}[\Delta Y \Delta T
Z] - \mathbb{E}[\Delta Y \Delta T Z h_a(X)]  \notag \\
& = \mathbb{E}[\mathbb{E}[\Delta Y\Delta T|X]\mathbb{E}[Z|X]] - \mathbb{E}[%
\mathbb{E}[\Delta Y\Delta T|X]\mathbb{E}[Z|X]h_a(X)]  \notag \\
& = \mathbb{E}[\Delta_{CP}(X)\omega_{CP}(X)(\mathbb{E}[Z|X] - \mathbb{E}%
[Z|X]h_a(X))],  \label{eq: new theorem proof 1}
\end{align}
where the first equality follows from the fact that $Z\in \{0,1\}$, the
second from the law of iterated expectations and Assumption \ref{as:
conditional independence}, and the third from Lemma \ref{lem: simple lemma}.
In addition, 
\begin{align}
\mathbb{E}[\Delta Y T(0)(Z - h_a(X))] & = \mathbb{E}[\mathbb{E}[\Delta Y
T(0)|X]\mathbb{E}[Z - h_a(X)|X]]  \notag \\
& = \mathbb{E}[\Delta_{AT}(X)\omega_{AT}(X)(\mathbb{E}[Z|X] - h_a(X))],
\label{eq: new theorem proof 2}
\end{align}
where the first equality follows from the law of iterated expectations and
Assumption \ref{as: conditional independence} and the second from Lemma \ref%
{lem: simple lemma}. Moreover, 
\begin{equation}  \label{eq: new theorem proof 3}
\mathbb{E}[Y(0)(Z - h_a(X))] = \mathbb{E}[\mathbb{E}[Y(0)|X](\mathbb{E}[Z|X]
- h_a(X))]
\end{equation}
by the law of iterated expectations and Assumption \ref{as: conditional
independence}. Combining \eqref{eq: general limit proof 1}, 
\eqref{eq: new
theorem proof 1}, \eqref{eq: new theorem proof 2}, and 
\eqref{eq: new
theorem proof 3} gives 
\begin{align}
\mathbb{E}[Y(Z - h_a(X))] & = \mathbb{E}[\Delta_{CP}(X)\omega_{CP}(X)(%
\mathbb{E}[Z|X] - \mathbb{E}[Z|X]h_a(X))]  \notag \\
& \quad + \mathbb{E}[\Delta_{AT}(X)\omega_{AT}(X)(\mathbb{E}[Z|X] - h_a(X))]
\notag \\
& \quad + \mathbb{E}[\mathbb{E}[Y(0)|X](\mathbb{E}[Z|X] - h_a(X))].
\label{eq: new theorem proof 4}
\end{align}
Further, $Y$ can also be decomposed as 
\begin{align*}
Y & = Y(1) - \Delta Y(1 - T) = Y(1) - \Delta Y(1 - T(1) + \Delta T(1-Z)) \\
& = -\Delta Y\Delta T(1-Z) - \Delta Y(1 - T(1)) + Y(1).
\end{align*}
Here, 
\begin{align}
\mathbb{E}[-\Delta Y \Delta T (1-Z)(Z-h_a(X))] & = \mathbb{E}[\Delta Y
\Delta T h_a(X)] - \mathbb{E}[\Delta Y \Delta T Z h_a(X)]  \notag \\
& = \mathbb{E}[\mathbb{E}[\Delta Y\Delta T|X]h_a(X)] - \mathbb{E}[\mathbb{E}%
[\Delta Y\Delta T|X]\mathbb{E}[Z|X]h_a(X)]  \notag \\
& = \mathbb{E}[\Delta_{CP}(X)\omega_{CP}(X)(h_a(X) - \mathbb{E}[Z|X]h_a(X))],
\label{eq: new theorem proof 5}
\end{align}
where the first equality follows from the fact that $Z\in \{0,1\}$, the
second from the law of iterated expectations and Assumption \ref{as:
conditional independence}, and the third from Lemma \ref{lem: simple lemma}.
In addition, 
\begin{align}
\mathbb{E}[-\Delta Y(1- T(1))(Z - h_a(X))] & = \mathbb{E}[\mathbb{E}[-\Delta
Y (1-T(1))|X]\mathbb{E}[Z - h_a(X)|X]]  \notag \\
& = \mathbb{E}[\Delta_{NT}(X)\omega_{NT}(X)(\mathbb{E}[Z|X] - h_a(X))],
\label{eq: new theorem proof 6}
\end{align}
where the first equality follows from the law of iterated expectations and
Assumption \ref{as: conditional independence} and the second from Lemma \ref%
{lem: simple lemma}. Moreover, 
\begin{equation}  \label{eq: new theorem proof 7}
\mathbb{E}[Y(1)(Z - h_a(X))] = \mathbb{E}[\mathbb{E}[Y(1)|X](\mathbb{E}[Z|X]
- h_a(X))]
\end{equation}
by the law of iterated expectations and Assumption \ref{as: conditional
independence}. Combining \eqref{eq: new theorem proof 4}, 
\eqref{eq: new
theorem proof 5}, \eqref{eq: new theorem proof 6}, and 
\eqref{eq: new
theorem proof 7} gives 
\begin{align}
\mathbb{E}[Y(Z - h_a(X))] & = \mathbb{E}[\Delta_{CP}(X)\omega_{CP}(X)(h_a(X)
- \mathbb{E}[Z|X]h_a(X))]  \notag \\
& \quad - \mathbb{E}[\Delta_{NT}(X)\omega_{NT}(X)(\mathbb{E}[Z|X] - h_a(X))]
\notag \\
& \quad + \mathbb{E}[\mathbb{E}[Y(1)|X](\mathbb{E}[Z|X] - h_a(X))].
\label{eq: new theorem proof 8}
\end{align}
Writing $Y = sY + (1-s)Y$, using \eqref{eq: new theorem proof 4} and %
\eqref{eq: new theorem proof 8}, and substituting the resulting expression
into \eqref{eq: probability limit general logit} and 
\eqref{eq: probability
limit general 2sls} gives the formulas for the numerators in \eqref{eq: probability limit general logit} and \eqref{eq: probability limit general 2sls}.

Next, we consider the denominators. Again fix $a\in \{ \Lambda,2SLS\}$ and $s\in[0,1]$. Note that 
\begin{equation}  \label{eq: treatment decomposition 0}
T = \Delta T Z + T(0),
\end{equation}
where $\Delta T = T(1) - T(0)$. Also, 
\begin{align}
\mathbb{E}[\Delta T Z (Z-h_a(X))] & = \mathbb{E}[\Delta T Z] - \mathbb{E}%
[\Delta T Z h_a(X)]  \notag \\
& = \mathbb{E}[\mathbb{E}[\Delta T|X]\mathbb{E}[Z|X]] - \mathbb{E}[\mathbb{E}%
[\Delta T|X]\mathbb{E}[Z|X]h_a(X)]  \notag \\
& = \mathbb{E}[\omega_{CP}(X)(\mathbb{E}[Z|X] - h_a(X)\mathbb{E}[Z|X])] ,
\label{eq: treatment decomposition 1}
\end{align}
where the first equality follows from the fact that $Z\in \{0,1\}$, the
second from the law of iterated expectations and Assumption \ref{as:
conditional independence}, and the third from Assumption \ref{as:
monotonicity}. Also, 
\begin{equation}
\mathbb{E}[T(0)(Z-h_a(X))] = \mathbb{E}[\mathbb{E}[T(0)|X](\mathbb{E}[Z|X]
- h_a(X))] = \mathbb{E}[\omega_{AT}(X)(\mathbb{E}[Z|X] - h_a(X))]
\label{eq: treatment decomposition 2}
\end{equation}
where the first equality follows from the law of iterated expectations and
Assumption \ref{as: conditional independence} and the second from Assumption %
\ref{as: monotonicity}. Combining %
\eqref{eq: treatment decomposition 0}, \eqref{eq: treatment decomposition 1}%
, and \eqref{eq: treatment decomposition 2} gives 
\begin{equation}
\mathbb{E}[T(Z - h_a(X))] 
= \mathbb{E}[\omega_{CP}(X)(\mathbb{E}[Z|X] -
h_a(X)\mathbb{E}[Z|X])] + \mathbb{E}[\omega_{AT}(X)(\mathbb{E}[Z|X] - h_a(X))].\label{eq: denominator derivation part 1}
\end{equation}
Further,
\begin{equation}\label{eq: denominator derivation part 2}
\mathbb E[(\omega_{CP}(X) + \omega_{AT}(X) + \omega_{NT}(X))(\mathbb{E}[Z|X] - h_a(X))] = \mathbb E[\mathbb{E}[Z|X] - h_a(X)] = 0,
\end{equation}
where the first equality follows from Assumption \ref{as: monotonicity} and the second from the first order conditions in \eqref{eq: definition theta0} and \eqref{eq: definition gamma0} since $X$ includes the constant one. Multiplying \eqref{eq: denominator derivation part 2} by $s-1$ and adding the result to \eqref{eq: denominator derivation part 1} gives the formulas for the denominators in \eqref{eq: probability limit general logit} and \eqref{eq: probability limit general 2sls} and completes the proof of the corollary.
\end{proof}

\begin{proof}[Proof of Corollary \protect \ref{cor: simplifications}]
By the law of iterated expectations and first-order conditions corresponding
to the optimization problems \eqref{eq: definition theta0} and 
\eqref{eq: definition gamma0}, we have 
\begin{equation}  \label{eq: first order condition theta}
\mathbb{E}[X(\mathbb{E}[Z|X] - \Lambda(X^{\top}\theta_0))] = \mathbb{E}[X(Z
- \Lambda(X^{\top}\theta_0))] = 0
\end{equation}
and 
\begin{equation}  \label{eq: first order condition gamma}
\mathbb{E}[X(\mathbb{E}[Z|X] - X^{\top}\gamma_0)] = \mathbb{E}[X(Z -
X^{\top}\gamma_0)] = 0,
\end{equation}
respectively.

Now, fix $a\in \{ \Lambda,2SLS\}$ and apply Theorem \ref{thm: probability
limits} with $s=1$ to obtain 
\begin{align}
\beta_a & =\frac{\mathbb{E}[\Delta_{CP}(X)\omega_{CP}(X)(\mathbb{E}[Z|X] -
h_a(X)\mathbb{E}[Z|X])]}{\mathbb{E}[T(Z - h_a(X))]}
\label{eq: cor 2.1 complier term} \\
& \quad+\frac{\mathbb{E}[\Delta_{AT}(X)\omega_{AT}(X)(\mathbb{E}%
[Z|X]-h_a(X))]}{\mathbb{E}[T(Z - h_a(X))]}
\label{eq: cor 2.1 always taker term} \\
& \quad+\frac{\mathbb{E}[\mathbb{E}[Y(0)|X](\mathbb{E}[Z|X]-h_a(X))]}{%
\mathbb{E}[T(Z - h_a(X))]}.  \label{eq: cor 2.1 noncausal term}
\end{align}
Also, by Lemma \ref{lem: simple lemma}, 
\begin{equation}\label{eq: cor 21 at formula}
\Delta_{AT}(X)\omega_{AT}(X) = \mathbb{E}[(Y(1) - Y(0))T(0)|X].
\end{equation}
Hence, the sum of terms in \eqref{eq: cor 2.1 always taker term} and %
\eqref{eq: cor 2.1 noncausal term} is equal to 
\begin{equation}\label{eq: cor 21 kill the terms}
\mathbb{E}[\mathbb{E}[(Y(1) - Y(0))T(0) + Y(0)|X](\mathbb{E}[Z|X] - h_a(X))]
= \mathbb{E}[\eta_0^{\top}X(\mathbb{E}[Z|X] - h_a(X))] = 0
\end{equation}
by Assumption \ref{as: linear means}. Therefore, it remains to derive an
appropriate expression for the denominator in 
\eqref{eq: cor 2.1 complier term}. To do so, substituting $s=1$ into \eqref{eq: denominator general formula} yields
\begin{equation}\label{eq: cor 21 replacement}
\mathbb E[T(Z-h_a(X))] = \mathbb E[\omega_{CP}(X)\mathbb E[Z|X](1-h_a(X))] + \mathbb E[\omega_{AT}(X)(E[Z|X] - h_a(X))].
\end{equation}
Here, $\omega_{AT}(X) = \mathbb E[T(0)|X]$ by Assumption \ref{as: linear first stage}, and so
$$
\mathbb E[\omega_{AT}(X)(E[Z|X] - h_a(X))] = 0
$$
by \eqref{eq: first order condition theta} and \eqref{eq: first order condition gamma}. Combining this equality with \eqref{eq: cor 21 replacement} gives the denominator in \eqref{eq: cor 2.1 complier term} and completes the proof of the corollary.
\end{proof}

\begin{proof}[Proof of Corollary \protect \ref{cor: less simplifications}]
The asserted claim follows from substituting \eqref{eq: cor 21 at formula}, \eqref{eq: cor 21 kill the terms}, and \eqref{eq: cor 21 replacement} into the expression for $\beta_a$ in \eqref{eq: cor 2.1 complier term}, \eqref{eq: cor 2.1 always taker term}, and \eqref{eq: cor 2.1 noncausal term} in the proof of Corollary \ref{cor: simplifications}.
\end{proof}

\begin{proof}[Proof of Corollary \protect \ref{cor: main result}]
Fix $a\in\{\Lambda,2SLS\}$ and observe that by Lemma \ref{lem: simple lemma}, 
\begin{equation}  \label{eq: cor 2.3 exp 1}
\Delta_{AT}(X)\omega_{AT}(X) = \mathbb{E}[(Y(1) - Y(0))T(0)|X]
\end{equation}
and 
\begin{equation}  \label{eq: cor 2.3 exp 2}
\Delta_{NT}(X)\omega_{NT}(X) = \mathbb{E}[(Y(1) - Y(0))(1 - T(1))|X].
\end{equation}
Also, for any $s\in[0,1]$, 
\begin{equation}  \label{eq: cor 2.3 exp 3}
(s-1)(Y(1) - Y(0))+sY(0)+(1-s)Y(1) = Y(0).
\end{equation}
Now, let $s$ be the number in $[0,1]$ appearing in Assumption \ref{as:
relaxed conditions}. Applying Theorem \ref{thm: probability limits} with
this $s$ and using \eqref{eq: cor 2.3 exp 1}, \eqref{eq: cor 2.3 exp 2}, %
\eqref{eq: cor 2.3 exp 3} gives 
\begin{align}
\beta_{a} & =\frac{\mathbb{E}[\Delta_{CP}(X)\omega_{CP}(X)(s\mathbb{E}[Z|X] + (1-s)h_a(X) - h_a(X)\mathbb{E}[Z|X])]}{\mathbb{E}[T(Z - h_a(X))]}
\label{eq: cor 2.3 exp 4} \\
& \quad +\frac{\mathbb{E}[\mathbb{E}[(Y(1) - Y(0))(sT(0)+(1-s)T(1)) + Y(0)|X](%
\mathbb{E}[Z|X] - h_a(X))]}{\mathbb{E}[T(Z - h_a(X))]}.  \label{eq: cor 2.3 exp new}
\end{align}
However, the term in \eqref{eq: cor 2.3 exp new} is zero by Assumption \ref{as: relaxed conditions}
and \eqref{eq: first order condition theta}-\eqref{eq: first order condition gamma} in the proof of
Corollary \ref{cor: simplifications}. Therefore, it remains to derive an
appropriate expression for the denominator in \eqref{eq: cor 2.3 exp 4}. To
do so, note that $\omega_{AT}(X) = \mathbb E[T(0)|X]$ and $\omega_{NT}(X) = 1 - \omega_{CP}(X) - \omega_{AT}(X) = 1 - \mathbb E[T(1)|X]$ by Assumption \ref{as: monotonicity}. Substituting these equalities into \eqref{eq: denominator general formula} in Theorem \ref{thm: probability limits} gives
\begin{align}
 \mathbb{E}[T(Z - h_a(X))] 
& = \mathbb{E}[\omega_{CP}(X)(s\mathbb{E}[Z|X] +
(1-s)h_a(X) - h_a(X)\mathbb{E}[Z|X])]  \label{eq: denominator general formula 2}\\
& \quad + \mathbb{E}[(\mathbb E[sT(0)+ (1-s)T(1)|X] + s-1)(\mathbb{E}[Z|X]-h_a(X))].\label{eq: denominator general formula 3}
\end{align}
However, the term in \eqref{eq: denominator general formula 3} is zero by Assumption \ref{as: relaxed conditions}, \eqref{eq: first order condition theta}-\eqref{eq: first order condition gamma} in the proof of
Corollary \ref{cor: simplifications}, and the fact that $X$ includes the constant one. The asserted claim follows.
\end{proof}

\begin{proof}[Proof of Corollary \protect \ref{cor: true mean}]
The asserted claim follows from applying Theorem \ref{thm: probability limits} with any $s\in[0,1]$.
\end{proof}

\section{Proofs for Section \protect \ref{sec: augmented estimator}}

\begin{proof}[Proof of Theorem \protect \ref{thm: augmented estimator general}%
]
Observe that under Assumption \ref{as: regularity conditions 2}, we have $%
\widehat \psi \to_p\overline \psi_0$ by Theorem 2.7 in \cite{NM94}. Thus,
the function 
\begin{equation*}
(\theta, \kappa) \mapsto \widehat Q(\theta, \kappa) = \frac{1}{n}%
\sum_{i=1}^n \Big(Z_i(X_i^{\top}\theta + \widehat C_i \kappa) - \log(1 +
\exp(X_i^{\top}\theta + \widehat C_i \kappa))\Big)
\end{equation*}
converges in probability point-wise for all $(\theta, \kappa)\in \mathbb{R}%
^p\times \mathbb{R}$ to the function 
\begin{equation*}
(\theta, \kappa) \mapsto Q_0(\theta, \kappa) = \mathbb{E}[Z(X^{\top}\theta +
\Phi(X^{\top}\overline \psi_0) \kappa) - \log(\exp(X^{\top}\theta +
\Phi(X^{\top}\overline \psi_0) \kappa))].
\end{equation*}
Hence, given that both functions are concave, under Assumption \ref{as:
regularity conditions 2}, again by Theorem 2.7 in \cite{NM94}, $(\widehat
\theta,\widehat \kappa)\to_p(\overline \theta_0,\overline \kappa_0)$.
Therefore, by the standard arguments, 
\begin{equation*}
\widehat \beta_{A\Lambda} \to_p \frac{\mathbb{E}[Y(Z -
\Lambda(X^{\top}\overline \theta_0 + \Phi(X^{\top}\overline \psi_0)\overline
\kappa_0))]}{\mathbb{E}[T(Z - \Lambda(X^{\top}\overline \theta_0 +
\Phi(X^{\top}\overline \psi_0)\overline \kappa_0))]}.
\end{equation*}
The rest of the proof coincides with that in Theorem \ref{thm: probability
limits}.
\end{proof}

\begin{proof}[Proof of Corollary \protect \ref{cor: augmented estimator
simplifications}]
By the first-order conditions corresponding to the optimization problem %
\eqref{eq: augmented logit}, we have 
\begin{equation}  \label{eq: foc augmented logit 1}
\mathbb{E}[X(\mathbb{E}[Z|X] - \Lambda(X^{\top}\overline \theta_0 +
\Phi(X^{\top}\overline \psi_0)\overline \kappa_0))] = \mathbb{E}[X(Z -
\Lambda(X^{\top}\theta_0 + \Phi(X^{\top}\overline \psi_0)\overline
\kappa_0))] = 0
\end{equation}
and 
\begin{align}
& \mathbb{E}[\Phi(X^{\top}\overline \psi_0)(\mathbb{E}[Z|X] -
\Lambda(X^{\top}\overline \theta_0 + \Phi(X^{\top}\overline \psi_0)\overline
\kappa_0))]  \notag \\
& \qquad \qquad \qquad \qquad \qquad = \mathbb{E}[\Phi(X^{\top}\overline
\psi_0)(Z - \Lambda(X^{\top}\theta_0 + \Phi(X^{\top}\overline
\psi_0)\overline \kappa_0))] = 0.  \label{eq: foc augmented logit 2}
\end{align}
The proof therefore follows along the lines in the proof of Corollary \ref%
{cor: simplifications} using the probability limit in Theorem \ref{thm:
augmented estimator general} instead of the probability limit in Theorem \ref%
{thm: probability limits} and relying on the first-order condition %
\eqref{eq: foc augmented logit 1} if Assumption \ref{as: linear first stage}
holds and on \eqref{eq: foc augmented logit 2} if Assumption \ref{as:
nonlinear first stage} holds.
\end{proof}

\begin{proof}[Proof of Corollary \protect \ref{cor: augmented estimator
correct logit}]
Under Assumption \ref{as: logit correct specification}, it follows from the
optimization problem in \eqref{eq: augmented logit} that $\overline \kappa_0
= 0$ and $\overline \theta_0 = \theta_0$, so that $h_{A\Lambda}(X) =
\Lambda(X^{\top}\theta_0) = \mathbb{E}[Z|X]$ with probability one. Hence,
the asserted claim follows from Theorem \ref{thm: augmented estimator
general}.
\end{proof}

\section{Proofs for Section \protect \ref{sec: asymptotic distribution}}

\begin{proof}[Proof of Theorem \protect \ref{thm: asymptotic distribution
logit based}]
By the standard asymptotic normality result for the logit estimator, under 
Assumption \ref{as: regularity conditions}, 

\begin{equation*}
\sqrt{n}(\widehat{\theta }-\theta _{0})=\Big(\mathbb{E}[\Lambda ^{\prime
}(X^{\top }\theta _{0})XX^{\top }]\Big)^{-1}\frac{1}{\sqrt{n}}%
\sum_{i=1}^{n}(Z_{i}-\Lambda (X_{i}^{\top }\theta _{0}))X_{i}+o_{p}(1).
\end{equation*}%
Also, 
\begin{align*}
& \frac{1}{\sqrt{n}}\sum_{i=1}^{n}(Y_{i}-T_{i}\beta _{\Lambda
})(Z_{i}-\Lambda (X_{i}^{\top }\widehat{\theta }))=\frac{1}{\sqrt{n}}%
\sum_{i=1}^{n}(Y_{i}-T_{i}\beta _{\Lambda })(Z_{i}-\Lambda (X_{i}^{\top
}\theta _{0})) \\
& \qquad \qquad \qquad -\frac{1}{n}\sum_{i=1}^{n}(Y_{i}-T_{i}\beta _{\Lambda
})\Lambda ^{\prime }(X_{i}^{\top }\theta _{0})X_{i}^{\top }\sqrt{n}(\widehat{%
\theta }-\theta _{0})+o_{p}(1) \\
& \qquad \qquad =\frac{1}{\sqrt{n}}\sum_{i=1}^{n}(Y_{i}-T_{i}\beta _{\Lambda
}-X_{i}^{\top }\varphi _{0})(Z_{i}-\Lambda (X_{i}^{\top }\theta
_{0}))+o_{p}(1)=\frac{1}{\sqrt{n}}\sum_{i=1}^{n}\ell _{i}^{\Lambda
}+o_{p}(1).
\end{align*}%
In addition, 
\begin{equation*}
\frac{1}{n}\sum_{i=1}^{n}T_{i}(Z_{i}-\Lambda (X_{i}^{\top }\widehat{\theta }%
))\rightarrow _{p}\mathbb{E}[T(Z-\Lambda (X^{\top }\theta _{0}))].
\end{equation*}%
Thus, 
\begin{align*}
\sqrt{n}(\widehat{\beta }-\beta _{\Lambda })& =\frac{n^{-1/2}%
\sum_{i=1}^{n}(Y_{i}-T_{i}\beta _{\Lambda })(Z_{i}-\Lambda (X_{i}^{\top }%
\widehat{\theta }))}{n^{-1}\sum_{i=1}^{n}T_{i}(Z_{i}-\Lambda (X_{i}^{\top
}\theta _{0}))} \\
& =\frac{n^{-1/2}\sum_{i=1}^{n}\ell _{i}^{\Lambda }}{\mathbb{E}[T(Z-\Lambda
(X^{\top }\theta _{0}))]}+o_{p}(1)\rightarrow _{d}N(0,\sigma _{\Lambda
}^{2}),
\end{align*}%
yielding the asserted claim.
\end{proof}

\begin{proof}[Proof of Theorem \protect \ref{thm: asy dist augmented logit
based estimator}]
By Theorem 2.7 and the proof of Theorem 3.1, both in \cite{NM94}, under
Assumption \ref{as: regularity conditions 3}, 
\begin{equation*}
\sqrt{n}(\widehat{\psi }-\overline{\psi }_{0}) = B^{-1}\frac{1}{\sqrt n}\sum_{i=1}^n S_i + o_p(1).
\end{equation*}
Thus,
\begin{align*}
& \frac{1}{\sqrt{n}}\sum_{i=1}^{n}\Big(Z_{i}-\Lambda (X_{i}^{\top }\overline{%
\theta }_{0}+\widehat{C}_{i}\overline{\kappa }_{0})\Big)\widehat W_i \\
& \qquad =\frac{1}{\sqrt{n}}\sum_{i=1}^{n}\Big(Z_{i}-\Lambda (X_{i}^{\top }%
\overline{\theta }_{0}+C_{i}\overline{\kappa }_{0})\Big)W_i +A_{1}B^{-1}\frac{1}{\sqrt n}\sum_{i=1}^n S_i+o_{p}(1) = O_p(1),
\end{align*}
and so, by the proof of Theorem 3.1 in \cite{NM94} again,
\begin{align*}
& \sqrt{n}\left( \left( 
\begin{array}{c}
\widehat{\theta } \\ 
\widehat{\kappa }%
\end{array}%
\right) -\left( 
\begin{array}{c}
\overline{\theta }_{0} \\ 
\overline{\kappa }_{0}%
\end{array}%
\right) \right) 
 =A_{0}^{-1}\frac{1}{\sqrt{n}}\sum_{i=1}^{n}\Big(%
Z_{i}-\Lambda (X_{i}^{\top }\overline{\theta }_{0}+\widehat{C}_{i}\overline{%
\kappa }_{0})\Big) \widehat W_i +o_{p}(1) \\
& \qquad\qquad\qquad = A_0^{-1}\frac{1}{\sqrt{n}}\sum_{i=1}^{n}\Big(Z_{i}-\Lambda (X_{i}^{\top }%
\overline{\theta }_{0}+C_{i}\overline{\kappa }_{0})\Big)W_i +A_0^{-1}A_{1}B^{-1}\frac{1}{\sqrt n}\sum_{i=1}^n S_i + o_p(1).
\end{align*}%
Therefore, 
\begin{equation*}
\frac{1}{\sqrt{n}}\sum_{i=1}^{n}(Y_{i}-T_{i}\beta _{A\Lambda
})(Z_{i}-\Lambda (X_{i}^{\top }\widehat{\theta }+\widehat{C}_{i}\widehat{%
\kappa }))=\frac{1}{\sqrt{n}}\sum_{i=1}^{n}(\ell _{i,1}^{A\Lambda }-\ell
_{i,2}^{A\Lambda })+o_{p}(1).
\end{equation*}%
Also, 
\begin{equation*}
\frac{1}{n}\sum_{i=1}^{n}T_{i}(Z_{i}-\Lambda (X_{i}^{\top }\widehat{\theta }+%
\widehat{C}_{i}\widehat{\kappa }))\rightarrow _{p}\mathbb{E}[T(Z-\Lambda
(X^{\top }\overline{\theta }_{0}+C\overline{\kappa }_{0}))].
\end{equation*}%
Hence, 
\begin{align*}
\sqrt{n}(\widehat{\beta }_{A\Lambda }-\beta _{A\Lambda })& =\frac{%
n^{-1/2}\sum_{i=1}^{n}(Y_{i}-T_{i}\beta _{A\Lambda })(Z_{i}-\Lambda
(X_{i}^{\top }\widehat{\theta }+\widehat{C}_{i}\widehat{\kappa }))}{%
n^{-1}\sum_{i=1}^{n}T_{i}(Z_{i}-\Lambda (X_{i}^{\top }\widehat{\theta }+%
\widehat{C}_{i}\widehat{\kappa }))} \\
& =\frac{n^{-1/2}\sum_{i=1}^{n}(\ell _{i,1}^{A\Lambda }-\ell
_{i,2}^{A\Lambda })}{\mathbb{E}[T(Z-\Lambda (X^{\top }\overline{\theta }%
_{0}+C\overline{\kappa }_{0}))]}+o_{p}(1)\rightarrow _{d}N(0,\sigma
_{A\Lambda }^{2}),
\end{align*}%
yielding the asserted claim.
\end{proof}

\begin{section}{Auxiliary Lemma}
\begin{lemma}\label{lem: simple lemma}
Suppose that Assumption \ref{as: monotonicity} is satisfied and that $\mathbb E[|Y|]<\infty$. Then
\begin{align*}
&\mathbb E[(Y(1) - Y(0))(T(1) - T(0))|X] = \Delta_{CP}(X)\omega_{CP}(X),\\
&\mathbb E[(Y(1) - Y(0))T(0)|X]  = \Delta_{AT}(X)\omega_{AT}(X),\\
&\mathbb E[(Y(1) - Y(0))(1 - T(1))|X] = \Delta_{NT}(X)\omega_{NT}(X).
\end{align*}
\end{lemma}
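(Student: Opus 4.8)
The plan is to leverage Assumption \ref{as: monotonicity} to turn each of the three ``treatment'' factors $T(1)-T(0)$, $T(0)$, and $1-T(1)$ into an indicator of membership in one of the groups $CP$, $AT$, $NT$, and then to peel off the conditional average treatment effect and the group probability via the law of iterated expectations. Since $\mathbb E[|Y|]<\infty$, all the conditional expectations below are well defined.

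First I would observe that $T(0),T(1)\in\{0,1\}$ together with $\mathbb P(T(1)\geq T(0))=1$ forces the group variable $G=(T(0),T(1))$ to take values in $\{NT,CP,AT\}$ with probability one, where $NT=(0,0)$, $CP=(0,1)$, $AT=(1,1)$. Going through these three cases, one sees that, with probability one,
\[
T(1)-T(0)=\mathds 1\{G=CP\},\qquad T(0)=\mathds 1\{G=AT\},\qquad 1-T(1)=\mathds 1\{G=NT\}.
\]
Next, for each $g\in\{CP,AT,NT\}$, I would apply the law of iterated expectations conditioning additionally on $G$:
\[
\mathbb E\big[(Y(1)-Y(0))\mathds 1\{G=g\}\,\big|\,X\big]
=\mathbb E\big[\,\mathbb E[Y(1)-Y(0)\mid G,X]\,\mathds 1\{G=g\}\,\big|\,X\big]
=\Delta_g(X)\,\omega_g(X),
\]
using the definitions $\Delta_g(X)=\mathbb E[Y(1)-Y(0)\mid G=g,X]$ and $\omega_g(X)=\mathbb P(G=g\mid X)$. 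Substituting the three indicator identities from the previous display into the left-hand sides of the three claimed equalities then finishes the proof.

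The only place Assumption \ref{as: monotonicity} enters — and the only real subtlety — is the identification of the treatment factors with group indicators: absent monotonicity, $T(1)-T(0)$ would equal $-1$ on the defier event, $T(0)$ and $1-T(1)$ would each pick up an extra defier term, and the clean one-line conclusion would fail. Everything after that step is a routine conditioning argument, so I do not anticipate any further obstacle.
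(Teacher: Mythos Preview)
Your proposal is correct and follows essentially the same approach as the paper: the paper also uses monotonicity to identify the events $\{\Delta T=1\}$, $\{T(0)=1\}$, $\{T(1)=0\}$ with the complier, always-taker, and never-taker groups respectively, and then applies the law of iterated expectations to factor out $\Delta_g(X)\omega_g(X)$. Your version is simply a bit more explicit about writing the treatment factors as group indicators before conditioning.
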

\begin{proof}
Let $\Delta Y = Y(1) - Y(0)$ and $\Delta T = T(1) - T(0)$. Then by the law of iterated expectations and Assumption \ref{as: monotonicity},
$$
\mathbb E[\Delta Y \Delta T|X] = \mathbb E[\Delta Y|X,\Delta T = 1]\mathbb P(\Delta T = 1|X) = \Delta_{CP}(X)\omega_{CP}(X),
$$
$$
\mathbb E[\Delta Y T(0)|X] = \mathbb E[\Delta Y|X,T(0) = 1]\mathbb P(T(0) = 1|X) = \Delta_{AT}(X)\omega_{AT}(X),
$$
$$
\mathbb E[\Delta Y (1-T(1))|X] = \mathbb E[\Delta Y|X,T(1) = 0]\mathbb P(T(1) = 0|X) = \Delta_{NT}(X)\omega_{NT}(X).
$$
The asserted claims follow.
\end{proof}
\end{section}

%\section{Proof of Corollary \ref{cor: simplifications}}\label{app: proof of corollary}

%\section{Proof of Corollary \ref{cor: true mean}}\label{app: proof of second corollary}

%\clearpage
\bibliographystyle{econ-econometrica}
\bibliography{ref}

\end{document}